\newtheorem{theorem}{Theorem}
\newtheorem*{theorem*}{Theorem}
\newtheorem{corollary}{Corollary}
\newtheorem{lemma}{Lemma}
\newtheorem{proposition}{Proposition}
\newcommand\E{\mathbb E}
\newcommand\Pb{\mathbb P}
\newcommand\R{\mathbb R}
\newcommand\Z{\mathbb Z}
\newcommand\Mc{\mathcal M}
\newcommand\Hc{\mathcal H}
\newcommand\Dc{\mathcal D}
\newcommand\Xc{\mathcal X}
\newcommand\Nc{\mathcal N}
\newcommand\Fc{\mathcal F}
\begin{document}
\begin{frontmatter}
\title{The non-equilibrium allele frequency spectrum\\
 in a Poisson random field framework}
\date{\today}

\author[ik]{Ingemar Kaj\corref{cor1}}
\ead{ikaj@math.uu.se}

\author[cfm]{Carina F. Mugal}
\ead{carina.mugal@ebc.uu.se}

\cortext[cor1]{Corresponding author}

\address[ik]{Department of Mathematics, Uppsala University,
  Uppsala, Sweden}
\address[cfm]{Department of Ecology and Genetics, Uppsala University,
 Uppsala, Sweden}

\begin{abstract}
In population genetic studies, the allele frequency spectrum (AFS)
efficiently summarizes genome-wide polymorphism data and shapes a
variety of allele frequency-based summary statistics. While existing
theory typically features equilibrium conditions, emerging methodology
requires an analytical understanding of the build-up of the allele
frequencies over time. In this work, we use the framework of Poisson
random fields to derive new representations of the non-equilibrium AFS
for the case of a Wright-Fisher population model with selection. In
our approach, the AFS is a scaling-limit of the expectation of a
Poisson stochastic integral and the representation of the
non-equilibrium AFS arises in terms of a fixation time probability
distribution.  The known duality between the Wright-Fisher diffusion
process and a birth and death process generalizing Kingman's
coalescent yields an additional representation. The results carry over
to the setting of a random sample drawn from the population and
provide the non-equilibrium behavior of sample statistics. Our
findings are consistent with and extend a previous approach where the
non-equilibrium AFS solves a partial differential forward equation
with a non-traditional boundary condition. Moreover, we provide a
bridge to previous coalescent-based work, and hence tie several
frameworks together. Since frequency-based summary statistics are
widely used in population genetics, for example, to identify candidate
loci of adaptive evolution, to infer the demographic history of a
population, or to improve our understanding of the underlying
mechanics of speciation events, the presented results are potentially
useful for a broad range of topics.
\end{abstract}

\begin{keyword}
population genetics \sep non-equilibrium allele frequency spectrum
\sep Poisson random field \sep coalescent theory \sep duality relation
\sep natural selection



\end{keyword}
\end{frontmatter}

\section{Introduction}

The allele frequency spectrum (AFS) describes the distribution of
allele frequencies over a large number of identical and independent
loci. In practice, the AFS is approximated by allele frequencies
recorded in a sample of individuals. Here, the recent progress in whole-genome
re-sequencing has significantly improved the accessibility of the AFS, and several allele
frequency-based summary statistics have become central measurements in population
genetic studies. The estimation of the AFS is then often based on polymorphic
nucleotide sites, where the frequency of the derived allele over a finite collection
of sites in the sample is summarized. In this context, the otherwise equivalent
term `site frequency spectrum' (SFS) is frequently used. For the purpose
of generality, we here use the term AFS.

The theory on the AFS was initiated in the 1930s with the classical
work of Fisher and Wright in a framework of diffusion theory including
effects of natural selection \citep{Fisher1930,Wright1931,Wright1938}.
Subsequently, \cite{Kimura1964} pioneered the systematic use of
stochastic processes in population genetics, and developed the theory
further. In particular, he considered the equilibrium distribution of
allele frequencies under irreversible mutation in an ensemble of polymorphic loci
\citep{Kimura1970a}. Central to these successful applications of diffusion
theory in describing the equilibrium limit AFS for various mutation
and selection scenarios is the Green function representation of
diffusion process occupation time functionals \citep{KarlinTaylor1981}.
Then, in order to study the impact of natural selection on the number
of fixations in diverging species, \cite{Sawyer1992} introduced the Poisson
random field framework. The basic assumptions of this approach are that
new mutant alleles arise at Poisson times, mutations are
irreversible, and the frequencies of the descendants of each
mutation are described by independent Markov processes (no
linkage). The loss or fixation of a mutant allele is captured by the
separate events of extinction or fixation of the Markov process. The
collection of Markov processes form a Poisson random field in the
sense that the limiting distributions of the allele frequencies are
independent Poisson random variables. In particular, the number of
fixations is a Poisson random variable with expected value increasing
linearly over time. Segregating mutations are, on the other hand, in
equilibrium with respect to time, and hence the marginal distributions
of the corresponding Poisson variables are stationary. In other words,
the AFS is assumed to be in equilibrium with respect to time.

More recently, \cite{Evans2007} initiated the study of the
non-equilibrium AFS in a single population including effects of
natural selection, in the sense of deriving a function $f(t,x)$ which
represents the expected fraction of alleles of frequency $x$ existing
at some time $t$, given an initial fraction $f(0,x)$ of alleles at
time $t=0$. Some of the modeling parameters, such as population size
and selection intensity, are also allowed to depend on time. The
resulting non-equilibrium AFS $f(t,x)$ is provided as a solution to a
partial differential equation (PDE), essentially the Kolmogorov
forward equation for the corresponding diffusion, linked to a given
rate of mutational influx via a specific boundary condition of
$f(t,x)$ as $x\to 0$. An additional approximation method using moments
is employed to study the resulting allele frequencies in a sample.
Building on this approach, \cite{Zivkovic2011} provide analytical
results on the non-equilibrium AFS for the neutral case, focusing on
time-dependence arising due to changes in population size.  In the
same direction, \cite{Zivkovic2015} consider the case of natural
selection and develop the moment approximation method for a scenario
of piecewise-constant population size starting from an equilibrium.

In a parallel methodological track the AFS has been studied using the
view of coalescent theory, where mutations are randomly placed on the
branches of a genealogy of a sample of individuals \citep{Kingman1982}.
First, \cite{Fu1995} obtained a representation of
the stationary AFS for a single population under the assumptions of
neutrality and constant population size, by deriving mean and variance
of the number of mutations on each branch of a given
length. \cite{Griffiths1998} explored the duality relation between the
neutral Wright-Fisher diffusion process and Kingman's pure death
coalescent process further and addressed deterministic changes in
population size. Moreover, \cite{Wakeley1997} obtained a description
of the joint AFS of two isolated populations descending from a common
ancestor under neutrality. \cite{Chen2012} elaborated on their work
and extended it to multiple populations and also modeled scenarios
such as selective sweeps, influx of migration and changes in population
size.

Here, we build on the work of \cite{Sawyer1992} and develop the
approach of \cite{Mugal2014} further to derive a
representation of the non-equilibrium AFS as the limiting expected
value of a suitable Poisson stochastic integral. The model is
developed in steps starting with finite population size $N$ and sequences
of $L$ sites, subject to mutational influx of derived alleles
and Wright-Fisher reproduction in discrete generation time. 
Assuming mutation and selection rates per individual and
generation of order $1/N$ and evolutionary time $t$ counting $Nt$
generations, we then apply the continuous time Wright-Fisher diffusion
approximation, but follow \cite{Evans2007} in keeping $N$ as a
modeling parameter. In a next stage of approximation 
the mutation rate per site tends to zero with preserved over-all
mutation rate for sequences, a procedure which we interpret and
implement as a limit in distribution as $L\to\infty$. The result is a
Poisson random field parametrized by $N$, which we study in some
detail. Then, we find the limiting expected values as $N\to\infty$ and
identify the time-dependent AFS which arises in the limit.
Thereby, we provide a link between the Poisson random field
approach by \cite{Sawyer1992} and the setting of \cite{Evans2007}, in
particular by identifying the PDE solution $f(t,x)$ in terms of a
Wright-Fisher fixation time probability distribution. An additional
representation is obtained by elaborating on the duality relation
between the Wright-Fisher diffusion process and a class of birth
and death processes, where birth rates are proportional to the
strength of selection \citep{Shiga1986,Athreya2005}.

\section{Poisson random field model}

\subsection{Basic Markov chain model} 

A population consists of $N$ individuals, where each individual is
represented by a sequence of $L$ sites. Random mutation events act on
sites, independently and uniformly over individuals, replacing an
ancestral allele by a derived. Only mono-allelic sites are affected by
mutation. Thus, the setting of the model only allows for two alleles,
the derived and the ancestral, in each site. The composition of
ancestral and derived alleles per site changes in discrete steps from
one generation to the next according to the Wright-Fisher reproduction
with selection, which relies on the following assumptions 1)
non-overlapping generations, 2) constant population size and 3) random
mating. The population dynamics is then given by a collection of
independent, identically distributed Markov chains in discrete time,
$\{(X^i_n)_{n\ge 0},1\le i\le L\}$, one component for each site.  The
state variable is
\[
X^i_n=\mbox{\# of individuals in generation
$n$ with the derived allele in site $i$}  
\]
and the state space of each chain is $\{0,1,\dots,N\}$.  An example path of
the Markov chain is visualized in  Figure \ref{fig:singlepos}. Site $i$ is
said to be mono-allelic at time $n$ if it carries the ancestral allele
throughout the entire population, so that $X^i_n=0$.  A trajectory
$(X^i_n)_{n\ge 0}$ consists of subsequent mono-allelic periods in state $0$
and active polymorphic periods with both ancestral and derived
alleles present. Whenever a derived allele reaches fixation in
generation $n$, that is $X_n^i=N$, then the derived is declared to be
the new ancestral allele at that site.

We let $\mu>0$ be the mutation
probability
\begin{align*}
\mu = & \textrm{ probability per individual and generation that an 
  ancestral is replaced}\\
&\textrm{ by the derived allele at a single mono-allelic site},
\end{align*}
and for each generation $n$ and site $i$,  we let $J^i_n$ be
binomially distributed independent random variables, such that
for $i=1,\dots,L$, $n\ge 1$,
\[
J_n^i =\mbox{\# of mutations in generation $n$ hitting a mono-allelic site $i$}
           \in \mathrm{Bin}(N,\mu). 
\]
In the limit of small mutation rate $\mu\to 0$, such that $N\mu$ is a
small probability, we have
\[
P(J_n^i=0)=(1-\mu)^N=1-N\mu+o(N\mu)
\]
as well as 
\[
   P(J_n^i=1)=N \mu+o(N\mu),\quad P(J_N^i\ge 2)=o(N\mu).
\]  
Hence, given $X_n^i$ in generation $n$, the
random variable
\[
J^i_{n+1} 1_{\{X^i_n=0\}}
=\mbox{\# of mutations in site $i$ at generation $n+1$}
\]
is approximately $\mathrm{Bin}(1,N\mu)$ distributed, for each $i$. 
It is the injection of new derived alleles in the population at mono-allelic
sites, and the change-of-state of the Markov chain from $0$ to $1$,
which marks the beginning of the active periods. To make the dynamics
during active periods precise, we let $s$, $s\ge -1$, denote the
coefficient of selection. Then, conditionally given $X_n^i=k$
derived alleles at site $i$ in the parental generation $n$, the number
of offspring derived alleles for the next generation $n+1$ is 
a binomially distributed random variable $H^i_{n+1}$, such that
\[
H_{n+1}^i\in \mathrm{Bin}(N,p_k), \quad
p_k=\frac{k(1+s)}{k(1+s)+(N-k)\cdot 1}.
\]
Here, the case $k=N$ represents fixation of the derived allele
and hence the substitution of a former ancestral type with a derived
in site $i$.  In our context, however, the derived is redefined to be
the new ancestral type from generation $n+1$ and onwards, and therefore
the offspring $H^i_{n+1}=N$ will not count towards
$X_{n+1}^i$.  Summing up, given an initial distribution of
$X_0=(X_0^1,\dots,X_0^L)$,  the components of the discrete time Markov chain
$X_n=(X^1_n,\dots,X_n^L)$, are defined recursively by
\[
X_{n+1}^i=H_{n+1}^i-N 1_{\{X_n^i=N\}}
+J^i_{n+1} 1_{\{X^i_n=0\}},\quad  n\ge 0,\quad 1\le i\le L.
\]
It follows that a mono-allelic site remains mono-allelic for a geometrically
distributed number of generations until a single mutation hits after
an average number of $1/N\mu$ generations.
\begin{figure}
\hskip -10mm\includegraphics[width=1.2\textwidth]{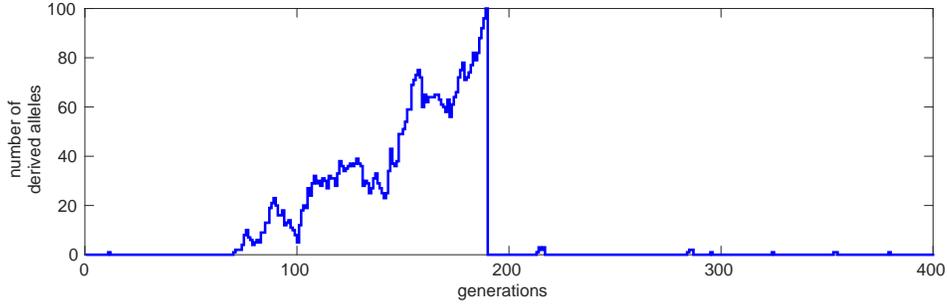}
\caption{Simulation of a single path $\{X_n,0\le n\le 400\}$ for the
  case $N=100$, $s=0$, and $\mu=2\cdot 10^{-4}$. A site is referred to
  as mono-allelic if $X_n=0$. Once a mutation hits a site, the site
  becomes polymorphic $0<X_n<N$, i.e. both the ancestral and the
  derived allele are segregating in the population. The ultimate fate
  of the derived allele is either extinction, $X_n=0$, or fixation,
  $X_n=N$. However, whenever a derived allele reaches fixation in
  generation $n$, then the derived allele is declared to be the new
  ancestral allele at that site, in other words $X_n$ is set to $0$.}
\label{fig:singlepos}
\end{figure}

\subsection{Diffusion approximation}

For the transition to a continuous time Markov chain we introduce the scaled
parameters $\theta>0$ and $\gamma$ defined by
\begin{align*} 
\theta &=L N\mu= \mbox{mutation intensity per sequence and generation}\\
\gamma &=Ns= \mbox{selection intensity per site and generation}.
\end{align*}
Along the relevant evolutionary time scale, $t$ units of time
correspond to $[Nt]$ generations. On this scale the total mutation
rate is $N\theta$ per sequence and time unit.
We define the time-scaled allele frequencies
$X_N(t)=(X_N^1(t),\dots,X_N^L(t))$ by
\[
X_N^i(t)=N^{-1} X_{[Nt]}^i,\quad 1\le i\le L,\quad t\ge 0.
\]
Then, with $h\sim 1/N$ denoting a small evolutionary time step, 
\[
X^i_N(t+h)-X^i_N(t)=\frac{1}{N}H^i_{[Nt]+1}
-1_{\{X^i_N(t)=1\}}-X^i_N(t)
+\frac{1}{N}J^i_{[Nt]+1}1_{\{X^i_N(t)=0\}}.
\]
Furthermore, by evaluating conditional expectations for each term, for
$x\in \{0,1/N,\dots,1\}$,
\begin{align*}
\E[X^i_N(t+h)-X^i_N(t)|X^i_N(t)=x]
& = h\Big(\frac{\gamma x(1-x)}{1+x\gamma/N}\, 1_{\{1/N\le x<1\}}
+\mu N\,1_{\{x=0\}}\Big)\\
& \approx h\Big( \gamma x(1-x)\, 
+\frac{\theta}{L}\,1_{\{x=0\}}\Big),
\end{align*}
where the approximation in the last step comes from ignoring the term
multiplied by $\gamma/N$.  Similarly, by computing second moments,
\begin{align*}
\E[(X^i_N(t+h)-X^i_N(t))^2|X^i_N(t)=x]
&=h\Big( x(1-x)+ \frac{\theta}{NL}\,1_{\{x=0\}}\Big)\\
&\approx h\, x(1-x). 
\end{align*}
The above relations of first and second moments are the approximative
drift and variance functions for the Wright-Fisher diffusion process
with selection, and with the additional mechanism of returns from
state $0$ to state $1/N$ with intensity $N\theta/L$ per site.  The
Wright-Fisher diffusion process arises in the limit of weak convergence of the
Wright-Fisher reproduction model as the population size $N$ tends to
infinity.  Letting $(W_t)_{t\ge 0}$ be a standard Brownian motion, the
Wright-Fisher diffusion process with scaled selection coefficient
$\gamma$, is the Markov process $(\xi_t)_{t\ge 0}$ with state space
$[0,1]$ defined as the unique strong solution of the stochastic
differential equation
\[
d\xi_t=\beta(\xi_t)\,dt+\sigma(\xi_t)\,dW_t,\quad t\ge 0,\quad
\xi_0=x\in (0,1), 
\]   
with drift function $\beta(x)$ and variance function $\sigma^2(x)$ given by
\[
\beta(x)= \gamma x(1-x),\quad \sigma^2(x)=x(1-x).
\]
The case $\gamma=0$ is the neutral Wright-Fisher diffusion process,
$\gamma>0$ corresponds to positive selection and $\gamma<0$ to
negative selection.  Formally, we assume that the paths $(\xi_t)$ are
elements in the class $\Dc$ of functions defined on the real line $\R$
with values in the unit interval $[0,1]$, which are continuous from
the right and have limits from the left.  We write $\Pb^\gamma_x$ for
the probability measure and $\E^\gamma_x$ for the expectation of the
process in $\Dc$, given that $\xi_0=x$.  It is convenient in the
current setting to consider in addition time-shifted processes,
initiated at an arbitrary time $s$.  For such an $s$, let
$\Pb_x^\gamma(d\xi^s)$ be the law of the Wright-Fisher diffusion
process with selection coefficient $\gamma$ and paths $(\xi^s_u)$ in
$\Dc$ with initial time $s$ and initial value $x$, that is
$\xi_u^s=0$, $u<s$, and $\xi_s^s=x$.  This is the strong solution of the
stochastic differential equation
\[
d\xi^s_t=\gamma\xi^s_t(1-\xi^s_t)\,dt+\sqrt{\xi^s_t(1-\xi^s_t)}\,dW_t,
\quad t\ge s, \quad \xi^s_s=x.
\]   
We use the same notations $\Pb_x^\gamma$ and $\E_x^\gamma$ for the
probability measure and expectation without explicit mentioning of the
initial time $s$, which will be clear from context.  With initial
value $x$, $0<x<1$, the process either gets fixed in $x=1$ or goes
extinct in $x=0$ with the corresponding fixation time $\tau_1$,
extinction time $\tau_0$, and absorption time $\tau=\tau_0\wedge
\tau_1$ the minimum of $\tau_0$ and $\tau_1$.  In this sense both
points $\{0,1\}$ are classified as boundary exit points. The exit
measure is given by the fixation probability
\begin{equation} \label{eq:fixprob}
q_\gamma(x)=\Pb_x^\gamma(\tau_1<\infty)=\frac{1-e^{-2\gamma
x}}{1-e^{-2\gamma}},\quad \gamma\not=0,\qquad 
q_0(x)=\Pb_x^0(\tau_1<\infty)=x.
\end{equation}

Based on these observations we now introduce a continuous time Markov
process $Y_{N,L}(t)=(Y_{N,L}^1(t),\dots, Y_{N,L}^L(t))$, which is our
final population model for the case of large but finite $N$ and fixed
$L$. An example of such a process is visualized in Figure \ref{fig:pathcuts}.
The components $Y_{N,L}^i(t)$ of $Y_{N,L}$ have state space
given by the continuous interval $[0,1)$ and jumps from the boundary.
The paths are cyclic with each cycle consisting of one mono-allelic period
in state $0$ and one polymorphic period of non-zero frequency. The
mono-allelic periods are exponentially distributed with intensity
$N\theta/L$. During an active period, starting at time $s$, the path
is a Wright-Fisher diffusion process $(\xi^s_t)_{t\ge s}$ with initial
state $1/N$.  The duration of the active period is the absorption time
$\tau$. The result is either fixation, which occurs with the scaled
fixation probability
\begin{equation}\label{def:omega}
\Pb_{1/N}^\gamma(\tau_1<\infty)\approx \frac{\omega_\gamma}{N},\quad 
\omega_\gamma=\lim_{N\to\infty}Nq_\gamma(1/N)=\frac{2\gamma}{1-e^{-2\gamma}},
\quad \omega_0=1,
\end{equation}
or, otherwise, extinction. The end of the active period marks the
beginning of a new mono-allelic period, hence a new cycle.
The previously studied frequency processes $X^i_N(t)$, are
discrete state approximations of $Y_{N,L}^i(t)$. The collection of allele
frequencies $Y_{N,L}^i(t)>0$ over all $L$ sites constitutes the AFS
at time $t$. Three example AFS are depicted by the histograms shown in the lower panel of Figure \ref{fig:pathcuts}.
\begin{figure}[t] 
\hskip -14mm \includegraphics[width=1.2\textwidth]{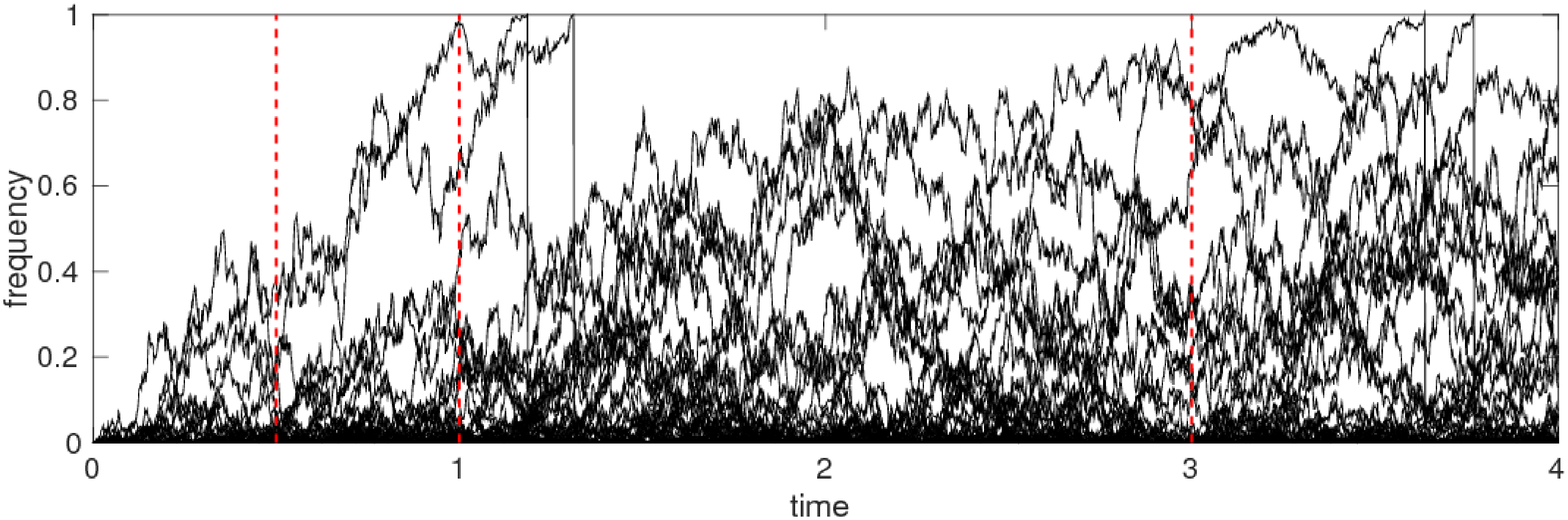}

\hskip -14mm \includegraphics[width=1.2\textwidth]{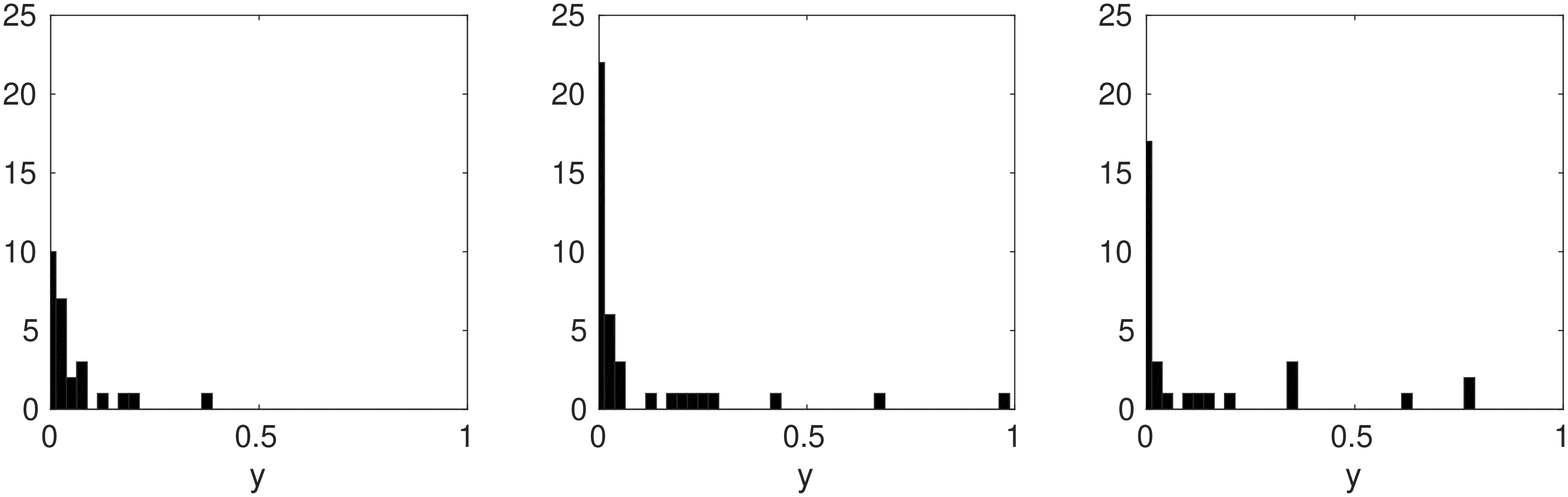}

\caption{Upper panel: a simulation of $Y_{N,L}(t)$, $0\le t\le 4$, for a
  population of size $N=500$ consisting of $L=1500$ sites, which are
  all mono-allelic at $t=0$ and evolve neutrally, $\gamma=0$, with a
  mutation rate $\theta=3$. Each black line represents the path of the
  derived allele frequency at one of the $L=1500$ sites. The three vertical
  red dashed lines mark the time points $t=0.5,1,3$, for which the AFS is
  depicted by histograms. Lower panel: the three histograms display the
  AFS for the time points $t=0.5,1,3$ from left to right, which illustrates the
  build-up of higher allele frequencies with time. }
\label{fig:pathcuts}
\end{figure}

\subsection{Poisson random field approximation}

The next stage in developing the model is concerned with calibrating
the length of the sequences measured in sites, $L$, with the strength
of mutation per site, $N\theta/L$.  The basic observation is that the
number of new mutations per time unit in $L$ mono-allelic sites is
$\mathrm{Bin}(L,N\theta/L)$, hence for large $L$ but fixed population
size $N$, approximately Poisson with mean $N\theta$.  Rather than
visualizing infinitely long sequences it is convenient therefore to imagine
a spatially continuous sequence (of length one, say), where polymorphic
``Poisson sites'' are placed according to a Poisson process with
intensity $N\theta$. 

The collection of allele frequencies over all $L$ sites $\{Y_{N,L}(t),t\ge 0\}$ form a
random field $\Xc_t^{N,L}$ on the positive real line, in the sense
\[
t\mapsto \Xc_t^{N,L}=\sum_{i=1}^L \delta_{Y_{N,L}^i(t)},\quad t\ge 0.
\]
For bounded functions $f:[0,1]\to \R$ on the unit interval $[0,1]$ we
use the bracket notation $\langle\Xc_t^{N,L}\!,f\rangle$ for the
application of the random field to $f$, and evaluate the combined
effect of all sites by
\[
t\mapsto \langle\Xc_t^{N,L}\!,f\rangle 
=\sum_{i=1}^L f(Y_{N,L}^i(t)). 
\]
To handle the limit operations as $L$ and, later, $N$ tend to infinity,
we specify the set of functions
\[
\Fc=\{f: [0,1]\to\R,\; \mbox{$f$ bounded},\; f(0)=0,\; \int_0^1
y^{-1}|f(y)|\,dy<\infty\},
\]
and the further restricted subset
\[
\Fc_0=\{f\in \Fc: f(1)=0\}.
\]
The class $\Fc_0$ enters naturally studying properties of the
Wright-Fisher diffusion process based on the Green function and
occupation time functional for diffusion processes,
\citep{KarlinTaylor1981,Breiman1992}.

The scale function $S_\gamma(x)$ and speed function $m_\gamma(x)$
associated with the Wright-Fisher diffusion process with selection parameter
$\gamma$ are 
\[ 
S_0(x)=x,\quad  S_\gamma(x)=\frac{1}{2\gamma}(1-e^{-2\gamma
  x}),\quad \gamma\not=0,\qquad 
m_\gamma(x)=\frac{e^{2\gamma x}}{x(1-x)}.
\]
By integration with respect to the Green function $G_\gamma(x,y)$, defined as
\[
G_\gamma(x,y)=\left\{
\begin{array}{lc} 
2q_\gamma(x)(S_\gamma(1)-S_\gamma(y))m(y), & 0\le x\le y\le 1 \\[2mm]
2(1-q_\gamma(x))(S_\gamma(y)-S_\gamma(0))m(y), & 0\le y\le x\le 1,
\end{array}
\right. 
\]
one obtains the time occupation functional 
\begin{equation}\label{eq:occupationtime}
\E^\gamma_x\Big[\int_0^{\tau}g(\xi_t)\,dt\Big]
=\int_0^1 G_\gamma(x,y)g(y)\,dy. 
\end{equation}
Using the functions
\[
\pi_\gamma(y)=\frac{1-e^{-2\gamma(1-y)}}{\gamma y(1-y)},\quad 
\widetilde \pi_\gamma(y)=\frac{e^{2\gamma y}-1}{\gamma y(1-y)},\quad
\gamma\not=0
\]
and
\[
\pi_0(y)=\frac{2}{y}\quad \widetilde \pi_0(y)=\frac{2}{1-y}
\]
we have 
\[ 
G_\gamma(x,y)=q_\gamma(x)\,1_{\{x<y\}}\pi_\gamma(y)+
(1-q_\gamma(x))\,1_{\{x>y\}}\widetilde\pi_\gamma(y)
\] 
and
\begin{equation}\label{eq:intgreen}
\int_0^1 G_\gamma(x,y) g(y)\,dy
=q_\gamma(x)\int_x^1 g(y)\pi_\gamma(y)\,dy
 +(1-q_\gamma(x))
\int_0^x g(y)\widetilde\pi_\gamma(y)\,dy
\end{equation}
whenever the integrals on the right hand side are well-defined.  It is
now straightforward to derive from (\ref{eq:occupationtime}) and
(\ref{eq:intgreen}), and also using the parameter
$\omega_\gamma$ introduced in (\ref{def:omega}), the following
well-known and fundamental limit property of Wright-Fisher diffusion processes: 
\begin{equation}\label{eq:scaledoccupationtime}
\lim_{N\to\infty}N\E_{1/N}^\gamma\int_0^\tau f(\xi_s)\,ds
=\omega_\gamma\int_0^1 f(y)\pi_\gamma(y)\,dy, \quad f\in\Fc_0,
\end{equation}
where $\omega_\gamma\, \pi_\gamma(y)$ is known as the allele frequency spectrum.  
 
We are now in position to introduce a random field $\Xc_t^N$, which
arises from $\Xc_t^{N,L}$ in the limit $L\to\infty$.  We first 
construct $\Xc_t^N$ as a stochastic integral with respect to a
Poisson measure, see e.g.\ \citep{Kallenberg2002}, and then establish the
convergence in $L$.  Let $n_N(ds,d\xi^s)$ be the product measure
defined on $\R^+\times \Dc$ by
\[
n_N(ds,d\xi^s)=N\theta\,ds\, \Pb_{1/N}^\gamma(d\xi^s)
\]
and let $\Nc_N(ds,d\xi^s)$ be a Poisson random measure on $\R^+\times
\Dc$ with intensity measure given by $n_N$.  
For $t\ge 0$, we let $\Xc_t^N$ be the Poisson random field  
defined by the stochastic integral
\[
\langle \Xc_t^N\!,f\rangle =\int_{\R^+\times\Dc}
f(\xi^s_t)\,\Nc_N(ds,d\xi^s).
\]

\begin{proposition} \label{prop:stochint}
The stochastic integral $\langle\Xc^N_t\!,f\rangle$ is well-defined
with finite expected value, such that, 
for every $f\in \Fc_0$,  
\[
\E \langle\Xc^N_t\!,f\rangle
= N\theta\, \E^\gamma_{1/N}\Big[\int_0^{t\wedge \tau} f(\xi^0_u)
  \,du\Big]<\infty
\]
and, for $f\in \Fc$ and fixed $t$, 
\[
\E \langle\Xc^N_t\!,f\rangle
= N\theta \,\E^\gamma_{1/N}\Big[\int_0^t f(\xi^0_u)\,du\Big]<\infty.
\]
\end{proposition}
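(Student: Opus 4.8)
The plan is to reduce everything to Campbell's formula for the Poisson random measure $\Nc_N$: for a non-negative measurable $g$ on $\R^+\times\Dc$ one has $\E\int g\,d\Nc_N=\int g\,dn_N$, and the identity persists for $g\in L^1(n_N)$, in which case $\int g\,d\Nc_N$ is moreover absolutely convergent almost surely, hence well-defined (see e.g.\ \citep{Kallenberg2002}). So it suffices to take $g(s,\xi^s)=f(\xi^s_t)$, to check that $g\in L^1(n_N)$, and to evaluate $\int g\,dn_N$.

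First I would unfold the intensity measure. By the definition of $n_N$,
\[
\int_{\R^+\times\Dc}|f(\xi^s_t)|\,n_N(ds,d\xi^s)=N\theta\int_0^\infty\E^\gamma_{1/N}\big[|f(\xi^s_t)|\big]\,ds,
\]
where under $\Pb^\gamma_{1/N}(d\xi^s)$ the path satisfies $\xi^s_u=0$ for $u<s$ and is a Wright-Fisher diffusion started from $1/N$ at time $s$. For $s>t$ we have $\xi^s_t=0$, so $f(\xi^s_t)=f(0)=0$ because $f\in\Fc$; thus only $s\in[0,t]$ contributes. By time-homogeneity of the defining stochastic differential equation, $(\xi^s_{s+u})_{u\ge 0}$ has the law of $(\xi^0_u)_{u\ge 0}$, so $\E^\gamma_{1/N}[|f(\xi^s_t)|]=\E^\gamma_{1/N}[|f(\xi^0_{t-s})|]$ for $s\le t$. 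Substituting $u=t-s$ and using $|f|\le\|f\|_\infty$, the last display equals $N\theta\int_0^t\E^\gamma_{1/N}[|f(\xi^0_u)|]\,du\le N\theta\,t\,\|f\|_\infty<\infty$, which establishes $g\in L^1(n_N)$. The same computation without absolute values, now legitimate by Fubini's theorem in view of the integrability just shown, gives
\[
\E\langle\Xc^N_t\!,f\rangle=N\theta\int_0^t\E^\gamma_{1/N}[f(\xi^0_u)]\,du=N\theta\,\E^\gamma_{1/N}\Big[\int_0^t f(\xi^0_u)\,du\Big]<\infty,
\]
which is the asserted formula for $f\in\Fc$ and fixed $t$.

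For the refinement to $f\in\Fc_0$ I would use in addition that $f(1)=0$. The diffusion $\xi^0$ started at $1/N$ is absorbed at the a.s.\ finite time $\tau=\tau_0\wedge\tau_1$ in one of the exit points $0$ or $1$ and stays there afterwards; since $f(0)=f(1)=0$, this forces $f(\xi^0_u)=f(\xi^0_u)\,1_{\{u<\tau\}}$ pathwise, hence $\int_0^t f(\xi^0_u)\,du=\int_0^{t\wedge\tau}f(\xi^0_u)\,du$. Taking expectations in the previous display yields $\E\langle\Xc^N_t\!,f\rangle=N\theta\,\E^\gamma_{1/N}\big[\int_0^{t\wedge\tau}f(\xi^0_u)\,du\big]$, finite as before. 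One may also record the sharper, time-uniform bound $\E^\gamma_{1/N}\big[\int_0^\tau|f(\xi^0_u)|\,du\big]=\int_0^1 G_\gamma(1/N,y)|f(y)|\,dy<\infty$ coming from \eqref{eq:occupationtime}--\eqref{eq:intgreen}, where finiteness uses precisely the integrability $\int_0^1 y^{-1}|f(y)|\,dy<\infty$ built into $\Fc$; this is not needed here but anticipates the limits $t\to\infty$ and $N\to\infty$.

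There is no deep obstacle; the proof is essentially an application of Campbell's formula. The one point requiring care is the bookkeeping around the boundary values of $f$: the condition $f(0)=0$ is what removes the (infinitely many) Poisson points with $s>t$ and makes the expectation finite at all, while $f(1)=0$ is exactly what permits replacing the horizon $t$ by $t\wedge\tau$ in the $\Fc_0$ case. It is also worth being explicit that for fixed $N$ and $t$ mere boundedness of $f$ on $[0,1]$ suffices for finiteness; the stronger hypothesis $\int_0^1 y^{-1}|f(y)|\,dy<\infty$ only becomes essential later, when one lets $N\to\infty$ in conjunction with \eqref{eq:scaledoccupationtime}.
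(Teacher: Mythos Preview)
Your proof is correct and follows essentially the same route as the paper: Campbell's formula for the Poisson measure, the time-homogeneity substitution $u=t-s$, and the role of the boundary conditions $f(0)=0$ (kills $s>t$) and $f(1)=0$ (replaces $t$ by $t\wedge\tau$). The only difference is one of emphasis and order: the paper leads with the $\Fc_0$ case and establishes finiteness via the Green-function bound $\int_0^1 G_\gamma(1/N,y)|f(y)|\,dy<\infty$ (which is uniform in $N$ and anticipates the later limit), whereas you lead with the $\Fc$ case using the cruder bound $N\theta\,t\,\|f\|_\infty$ and then specialize, relegating the Green-function estimate to a remark.
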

\begin{proof} 
For the existence and finite expectation of the Poisson integral 
it is sufficient to show
\[
\int_{\R^+\times\Dc}|f(\xi^s_t)|\,n_N(ds,d\xi^s)
=N\theta \E^\gamma_{1/N}\int_0^{t\wedge\tau} |f(\xi_t^s)|\,ds <\infty.
\]
However, for $f\in\Fc_0$ the right hand side is bounded by 
\[
N\theta \E^\gamma_{1/N}\int_0^\tau |f(\xi_t^s)|\,ds
=N\theta \int_0^1 G_\gamma(1/N,y)|f(y)|\,dy,
\]
which is finite for every $N$ and, moreover, has a finite limit as
$N\to\infty$.  Thus, $\langle\Xc^N_t\!,f\rangle$ exists with finite
expected value
\[
N\theta \E^\gamma_{1/N}\int_0^{t\wedge\tau} f(\xi^s_t)\,ds
=N\theta \E^\gamma_{1/N}\int_0^{t\wedge\tau} f(\xi^0_u)\,du.
\]
A similar argument shows the claim for $f\in \Fc$ and fixed $t$. 
\end{proof}

\begin{proposition} \label{prop:Ltoinfty} Let $f\in \Fc_0$. 
The allele frequency random field $\langle\Xc_t^{N,L}\!,f\rangle$ converges
as $L$ tends to infinity to the Poisson random field
$\langle\Xc_t^N\!,f\rangle$, 
\[
\{\langle\Xc_t^{N,L}\!,f\rangle,\;t\ge 0\} 
\Rightarrow \{\langle\Xc_t^N\!,f\rangle,\; t\ge 0\},
\]
in the sense of convergence of random processes in finite-dimensional
distribution.  
\end{proposition}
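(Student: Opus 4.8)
The plan is to establish convergence of the finite-dimensional distributions by comparing characteristic functions, using that for fixed $N$ the $L$ site processes $Y^i_{N,L}$ are independent and identically distributed. Fix times $0\le t_1<\cdots<t_m$ and real numbers $a_1,\dots,a_m$, and abbreviate $g(\xi)=\sum_{k=1}^m a_k f(\xi_{t_k})$ for $\xi\in\Dc$. Taking all sites mono-allelic at $t=0$ (in accordance with $n_N$ being supported on $s\ge 0$), the path $\eta^L$ of a single site is an element of $\Dc$ equal to $0$ before its first mutation, and
\[
\E\exp\Big(i\sum_{k=1}^m a_k\langle\Xc^{N,L}_{t_k},f\rangle\Big)=\big(\E\, e^{i g(\eta^L)}\big)^L .
\]
The target is the characteristic function of $(\langle\Xc^N_{t_k},f\rangle)_{k\le m}$, which by Campbell's formula for the Poisson integral $\langle\Xc^N_t,f\rangle$ equals $\exp\!\big(\int_{\R^+\times\Dc}(e^{ig(\xi^s)}-1)\,n_N(ds,d\xi^s)\big)$; here the integrand vanishes for $s>t_m$, so the integral runs over a set of finite $n_N$-measure and is well defined. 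It therefore suffices to show that $c_L:=L\,\E[1-e^{ig(\eta^L)}]$ converges to $c:=\int_{\R^+\times\Dc}(1-e^{ig(\xi^s)})\,n_N(ds,d\xi^s)=N\theta\int_0^{t_m}\E^\gamma_{1/N}[1-e^{ig(\xi^s)}]\,ds$, since then $\big(\E e^{ig(\eta^L)}\big)^L=(1-c_L/L)^L\to e^{-c}$.

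First I would split according to the number $M$ of mutations hitting the site during $[0,t_m]$. Because every mono-allelic period is exponential with rate $N\theta/L$, one gets $\Pb(M\ge 1)\le N\theta t_m/L$ and, since two such waiting times must fit inside $[0,t_m]$, $\Pb(M\ge 2)\le (N\theta t_m/L)^2$, whence $L\,\Pb(M\ge 2)\to 0$. On $\{M=0\}$ the site stays at $0$ and $g(\eta^L)=0$, so $c_L=L\,\E[(1-e^{ig(\eta^L)})1_{\{M=1\}}]+O(1/L)$. On $\{M=1\}$, letting $s$ denote the unique mutation time, the site path coincides on $[0,t_m]$ with the Wright-Fisher path $\xi^s$ started from $1/N$ at time $s$, apart from what happens after its absorption time $\tau$; but since $f(0)=f(1)=0$ for $f\in\Fc_0$, that tail contributes nothing whether the diffusion is absorbed at $0$, absorbed at $1$, or — as in the definition of $Y_{N,L}$ — reset to $0$ after fixation. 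Hence $g(\eta^L)=g(\xi^s)$ on $\{M=1\}$, and conditioning on $s$ yields
\[
c_L=O(1/L)+\int_0^{t_m} N\theta\, e^{-N\theta s/L}\,\E^\gamma_{1/N}\big[(1-e^{ig(\xi^s)})\,w_L(s,\xi^s)\big]\,ds,
\]
where $w_L\in[0,1]$ is the conditional probability that no second mutation arrives before $t_m$ (equal to $1$ if $\tau>t_m-s$ and to $e^{-N\theta(t_m-s-\tau)/L}$ otherwise).

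Finally I would pass to the limit: the integrand is bounded by $2N\theta$ on $[0,t_m]$, while $e^{-N\theta s/L}\to 1$ and $w_L\to 1$, so dominated convergence (with an inner application under $\E^\gamma_{1/N}$) gives $c_L\to N\theta\int_0^{t_m}\E^\gamma_{1/N}[1-e^{ig(\xi^s)}]\,ds=c$. This yields $\E\exp(i\sum_k a_k\langle\Xc^{N,L}_{t_k},f\rangle)\to\exp(\int(e^{ig(\xi^s)}-1)\,n_N(ds,d\xi^s))$ for every $m$, every $t_1<\cdots<t_m$ and all $a_1,\dots,a_m$, which is exactly the claimed convergence in finite-dimensional distribution. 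I expect the one genuine difficulty to be the bookkeeping rather than the analysis: showing that asymptotically only a single cycle within $[0,t_m]$ contributes, and that on this single-cycle event the vector of site contributions is precisely the evaluation of one extended diffusion path — and it is the restriction $f\in\Fc_0$, which renders fixation and the relabelling of the derived allele invisible, that makes this identification clean.
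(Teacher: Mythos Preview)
Your proof is correct and follows essentially the same route as the paper's: reduce finite-dimensional convergence to a single real linear combination, exploit the i.i.d.\ site structure to write the transform as an $L$th power, and show that as $L\to\infty$ only the first mutation cycle inside $[0,t_m]$ survives, which identifies the limit with the Laplace/Fourier functional of the Poisson integral $\langle\Xc^N_t,f\rangle$. The only cosmetic differences are that the paper uses moment generating functions and the language of measure-indexing ($\mu=\sum_k\alpha_k\delta_{t_k}$) where you use characteristic functions directly, and that the paper phrases the ``single cycle'' reduction via the decomposition $\int f(Y)\,\mu(du)=\int_{\kappa_1}^{\kappa_1+\tau} f(\xi^{\kappa_1})\,\mu(du)+\int_{\kappa_1+\tau+\kappa_2}^{\infty}\ldots$ together with $\kappa_2\to\infty$, whereas you count mutations and bound $L\,\Pb(M\ge 2)$ explicitly; your version is arguably the more transparent of the two, and your remark on why $f(1)=0$ is what makes the fixation--relabelling invisible is exactly the point.
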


The proof of this proposition employs an indexing method using
measures, which amounts to taking the limit as $L\to\infty$ of the
quantities
\[
 \int \langle\Xc_u^{N,L}\!,f\rangle\,\mu(du),
\]
for a suitable class of measures $\mu$.  This technique has been used
for other random fields elsewhere, and the technical aspects are not
central for the specific problem at hand. Therefore, the proof is
given separately in Section \ref{sec:technical}.

\subsection{The stationary functional}

In our approach, we start from a completely mono-allelic state at time $t=0$,
where $\langle\Xc_t^N\!,f\rangle$, $t\ge 0$, represents the non-equilibrium build-up
of allele frequencies towards a steady-state spectrum of frequencies in the
limit $t\to\infty$. One way of identifying such an asymptotic limit is achieved
by tuning the initial state of the process to obtain a stationary system. Here,
the appropriate initial state is obtained simply by including all mutations
which occurred at some time $s<0$ and counting the resulting frequencies
$\xi_t^s$ at time $t>0$. The result is a stationary version $\langle\Xc_\infty^N,f\rangle$
of the population functional $\langle\Xc_t^N\!,f\rangle$. Indeed,
observing that $f\in \Fc_0$ implies $g=e^{\alpha f}-1\in\Fc_0$,
\[
\langle\Xc_t^N\!,f\rangle=
\int_{\R_+\times\Dc} f(\xi^s_t)\,\Nc_N(ds,d\xi^s), \quad f\in \Fc_0, 
\]
has the Laplace functional
\begin{align*}
\ln \E \exp\{\alpha \langle\Xc_t^N\!,f\rangle\}
&= N\theta\,\E_{1/N}^\gamma
\Big[\int_0^{t\wedge\tau} (e^{\alpha  f(\xi^s_t)}-1)\,ds\Big]\\
&= N\theta\,\E_{1/N}^\gamma
\Big[\int_0^{t\wedge\tau} (e^{\alpha  f(\xi^0_u)}-1)\,du\Big],
\end{align*}
which converges as $t\to\infty$ to
\[
N\theta\,\E_{1/N}^\gamma
\Big[\int_0^\tau(e^{\alpha  f(\xi^0_u)}-1)\,du\Big]
=\ln \E \exp\{\alpha \langle\Xc_\infty^N,f\rangle\}.
\]
Here, the intensity measure of 
\[
\langle\Xc_\infty^N,f\rangle=
\int_{\R\times\Dc} f(\xi^s_t)\,\Nc_N(ds,d\xi^s)
\]
in the variable $s$ extends to all of the real line. Furthermore, by
(\ref{eq:scaledoccupationtime}),
\[
\lim_{N\to\infty} \ln \E \exp\{\alpha \langle\Xc_\infty^N,f\rangle\}
=\theta \omega_\gamma \int_0^1 (e^{\alpha f(y)}-1)\pi_\gamma(y)\,dy.
\]
Hence $\langle\Xc_\infty^N,f\rangle$ converges in distribution as
$N\to\infty$ to the stochastic integral 
\begin{equation}\label{eq:statstochint}
\langle\Xc_\infty,f\rangle=\int f(y)\,\Nc(dy),
\end{equation}
where $\Nc(dy)$ is a Poisson random measure on $[0,1]$ with intensity
measure $\theta\omega_\gamma \pi_\gamma(y)dy$.

\section{The non-equilibrium allele frequency spectrum}

The limiting Poisson intensity $\omega_\gamma \pi_\gamma(y)dy$ of the
stationary model $\Xc_\infty$ in (\ref{eq:statstochint}) appeared in
(\ref{eq:scaledoccupationtime}) as the kernel of a scaled occupation time
functional. The main theoretical result in this work is the
derivation of the fixed time and large population size limit of the
Poisson integral expectations
\[
\lim_{N\to\infty} \E^\gamma\langle \Xc_t^N\!,f\rangle
\] 
in Proposition \ref{prop:stochint}.  In doing so we obtain a
non-equilibrium version of the AFS which represents the build-up
of frequencies over a time period $[0,t]$. To simplify notation from now on we write
$(\xi_t)_{t\ge 0}$ (rather than $(\xi^0_t)$) for the the Wright-Fisher
diffusion process with initial time $t=0$. Then
\[
\E\langle \Xc_t^N\!,f\rangle = 
\theta N\E_{1/N}^\gamma\int_0^t f(\xi_u)\,du,\quad f\in \Fc
\]
and
\[ 
N\E_{1/N}^\gamma\int_0^t f(\xi_u)\,du
=N \E_{1/N}^\gamma\int_0^{t\wedge\tau} f(\xi_u)\,du,\quad f\in \Fc_0.
\]

\subsection{Representation in terms of the probability distribution of the time to fixation}

Let $\Pb^{*\gamma}_x$ and $\E^{*\gamma}_x$ be the distribution and
expectation of the Wright-Fisher diffusion process with selection
coefficient $\gamma$ conditioned on the event of fixation,
$\tau_1<\infty$.  Then, by symmetry,
\begin{align*}
\Pb^\gamma_x(\tau< t)&=\Pb^\gamma_x(\tau_1<
t)+\Pb^\gamma_x(\tau_0< t)\\
&=\Pb^{*\gamma}_x(\tau_1< t)q_\gamma(x)
+\Pb^{*-\gamma}_{1-x}(\tau_1< t)(1-q_\gamma(x)),
\end{align*}
where $q_\gamma(x)$ is the fixation probability defined in
(\ref{eq:fixprob}).  For the neutral case $\gamma=0$, the fixation
time distribution given that fixation occurs is given by 
\citep{Kimura1970b}
\begin{equation}\label{eq:kimurafixation}
\Pb_x^{*0}(\tau_1>t)=(1-x)\sum_{i=2}^\infty (2i-1)(-1)^i
H([2-i,i+1],[2],x)\,e^{-\binom{i}{2}t}, 
\end{equation}
where $H$ is the hypergeometric function. In particular,
\[ 
\Pb_0^{*0}(\tau_1>t)=\sum_{i=2}^\infty (2i-1)(-1)^i \,e^{-\binom{i}{2}t}.
\]
The time-dependent integration kernel of the non-equilibrium AFS
turns out to be
$\omega_\gamma\Pb^{*\gamma}_{1-y}(\tau_1\le t) \,\pi_\gamma(y)dy$, in
the sense of the following result. 

\begin{theorem}\label{thm:main} Let $f\in \Fc_0$.  Then 
\begin{align}\label{eq:thmmainlimit}
 \lim_{N\to\infty} \E\langle \Xc_t^N\!,f\rangle
&=\theta\omega_\gamma\int_0^1 f(y) \,\Pb^{*\gamma}_{1-y}(\tau_1\le t)
\pi_\gamma(y)\,dy\\
\label{eq:thmmainlimit2}
&=\theta\int_0^1 f(y) \frac{\Pb^{\gamma}_{1-y}(\tau_1\le t)}{1-y}\pi_0(y)\,dy
\end{align}
More generally, if $f\in \Fc$, then 
\begin{align*}
\lim_{N\to\infty} \E\langle \Xc_t^N\!,f\rangle 
= \theta\omega_\gamma \int_0^1 (f(y)-f(1) q_\gamma(y))\,
\Pb_{1-y}^{*\gamma}(\tau_1\le t)\pi_\gamma(y)\,dy+\theta f(1)\omega_\gamma\, t
\end{align*}
and
\begin{align}\nonumber
\lim_{N\to\infty} 
N\E_{1/N}^\gamma \int_0^{t\wedge\tau} f(\xi_u)\,du 
&= \theta\omega_\gamma \int_0^1 (f(y)-f(1) q_\gamma(y))\,
\Pb_{1-y}^{*\gamma}(\tau_1\le t)\pi_\gamma(y)\,dy\\
&\quad +\theta f(1)\omega_\gamma \,\int_0^t \Pb^{*\gamma}_0(\tau_1>s)\,ds.
\label{eq:thmmainlimit3}
\end{align}
\end{theorem}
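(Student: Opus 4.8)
The starting point is the identity from Proposition~\ref{prop:stochint}, namely $\E\langle\Xc_t^N,f\rangle = N\theta\,\E^\gamma_{1/N}\int_0^{t\wedge\tau}f(\xi_u)\,du$ for $f\in\Fc_0$. The plan is to decompose the occupation time integral according to whether the trajectory eventually fixes or goes extinct, and then pass to the $N\to\infty$ limit. Concretely, I would write $\int_0^{t\wedge\tau}f(\xi_u)\,du = \int_0^\tau f(\xi_u)\,du - \int_t^\tau f(\xi_u)\,1_{\{\tau>t\}}\,du$ and handle the two pieces separately. The first piece, after multiplication by $N$, converges by \eqref{eq:scaledoccupationtime} to $\omega_\gamma\int_0^1 f(y)\pi_\gamma(y)\,dy$, the stationary spectrum. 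So the whole problem reduces to identifying $\lim_{N\to\infty} N\,\E^\gamma_{1/N}\big[1_{\{\tau>t\}}\int_t^\tau f(\xi_u)\,du\big]$ as $\omega_\gamma\int_0^1 f(y)\,\Pb^{*\gamma}_{1-y}(\tau_1>t)\pi_\gamma(y)\,dy$, since $1-\Pb^{*\gamma}_{1-y}(\tau_1\le t) = \Pb^{*\gamma}_{1-y}(\tau_1>t)$ makes the two expressions add up correctly.

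For that remaining limit I would condition on the state $\xi_t$ at time $t$ on the event $\{\tau>t\}$ (i.e.\ $\{t<\tau\}$, the process still segregating at time $t$) and use the strong Markov property: given $\xi_t = x$ with $0<x<1$, the residual integral $\int_t^\tau f(\xi_u)\,du$ is distributed as $\int_0^{\tau'} f(\xi'_u)\,du$ for a fresh Wright-Fisher diffusion $\xi'$ started at $x$, which by \eqref{eq:occupationtime} has expectation $\int_0^1 G_\gamma(x,z)f(z)\,dz$. Thus the quantity to analyze becomes $N\,\E^\gamma_{1/N}\big[1_{\{t<\tau\}}\int_0^1 G_\gamma(\xi_t,z)f(z)\,dz\big]$. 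The key analytic input is the behavior of the sub-probability measure $\Pb^\gamma_{1/N}(\xi_t\in dx,\, t<\tau)$, scaled by $N$, as $N\to\infty$: a mutation entering at frequency $1/N$ that is still alive at time $t$ has, in the limit, density proportional to the scaled Green-function-type kernel. Here I expect the conditioning-on-fixation decomposition $\Pb^\gamma_x(\,\cdot\,, t<\tau) = \Pb^{*\gamma}_x(\,\cdot\,,\tau_1>t)\,q_\gamma(x) + \Pb^{*-\gamma}_{1-x}(\,\cdot\,,\tau_1>t)(1-q_\gamma(x))$ to enter, together with the fact that $Nq_\gamma(1/N)\to\omega_\gamma$ while $1-q_\gamma(1/N)\to 1$, so only the extinction-bound part survives at the $O(1/N)$ scale in the relevant way --- one must then use the symmetry $z\leftrightarrow 1-z$, $\gamma\leftrightarrow-\gamma$ and the relation between $\pi_\gamma$, $\widetilde\pi_\gamma$ and the Green function in \eqref{eq:intgreen} to recognize $\Pb^{*\gamma}_{1-y}(\tau_1>t)$ appearing against $\pi_\gamma(y)$.

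Once \eqref{eq:thmmainlimit} is established, \eqref{eq:thmmainlimit2} follows by the algebraic identity $\omega_\gamma\,\Pb^{*\gamma}_{1-y}(\tau_1\le t)\,\pi_\gamma(y) = \Pb^\gamma_{1-y}(\tau_1\le t)\,\pi_0(y)/(1-y)$, which I would verify by combining $q_\gamma(1-y)\,\omega_\gamma = $ (the appropriate prefactor), the definition $\Pb^\gamma_{1-y}(\tau_1\le t) = q_\gamma(1-y)\,\Pb^{*\gamma}_{1-y}(\tau_1\le t)$, and the explicit forms of $\pi_\gamma$, $\pi_0$; this is a routine but slightly fiddly computation with the exponential factors. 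For the extensions to $f\in\Fc$, I would split $f = (f - f(1)q_\gamma) + f(1)q_\gamma$; the first summand lies in $\Fc_0$ (since $q_\gamma(1)=1$) and $q_\gamma$ is harmonic for the diffusion generator, so $\int_0^{t\wedge\tau} q_\gamma(\xi_u)\,du$ and the martingale property of $q_\gamma(\xi_{t\wedge\tau})$ give the extra linear-in-$t$ term $\theta f(1)\omega_\gamma t$ for $\E\langle\Xc_t^N,f\rangle$, while for $N\E^\gamma_{1/N}\int_0^{t\wedge\tau}f(\xi_u)\,du$ the same bookkeeping produces $\theta f(1)\omega_\gamma\int_0^t\Pb^{*\gamma}_0(\tau_1>s)\,ds$ instead, because the contribution of the $f(1)q_\gamma$ part is $N\,\E^\gamma_{1/N}\int_0^{t\wedge\tau}q_\gamma(\xi_u)\,du$, and $N\,\Pb^\gamma_{1/N}(\tau_1>s)\to\omega_\gamma\,\Pb^{*\gamma}_0(\tau_1>s)$.

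The main obstacle I anticipate is the interchange-of-limits justification in the second paragraph: controlling the scaled measure $N\,\Pb^\gamma_{1/N}(\xi_t\in dx,\,t<\tau)$ near the boundary $x\to 0$ well enough to integrate $G_\gamma(x,z)f(z)$ against it and pass to the limit, given that $G_\gamma(x,z)$ itself is only integrable against $f\in\Fc_0$ due to the $1/z$ and $1/(1-z)$ singularities in $\pi_\gamma$, $\widetilde\pi_\gamma$. A dominated-convergence argument will require a uniform-in-$N$ bound, presumably obtained by comparison with the $t=0$ (stationary) case where $N\,G_\gamma(1/N,z)$ converges monotonically, so that the $t>0$ quantity is dominated by the $t=0$ one; making this domination rigorous, together with the continuity of $x\mapsto\Pb^{*\gamma}_x(\tau_1>t)$ at $x=0$, is the technical heart of the argument.
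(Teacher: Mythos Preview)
Your overall strategy differs from the paper's: you decompose the occupation time as the stationary piece minus a tail $\int_t^\tau$, whereas the paper uses a time-and-space reversal of the diffusion path. The treatments of the equivalence \eqref{eq:thmmainlimit}--\eqref{eq:thmmainlimit2} and of the extension to $\Fc$ (your splitting $f=(f-f(1)q_\gamma)+f(1)q_\gamma$ is exactly the content of Lemma~\ref{lem:extendmain}) coincide with the paper's.

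There is, however, a genuine gap at precisely the spot you flag as ``the main obstacle.'' Conditioning forward at time $t$ reduces the tail to $N\,\E^\gamma_{1/N}\bigl[H_f(\xi_t);\,t<\tau\bigr]$ with $H_f(x)=\int_0^1 G_\gamma(x,y)f(y)\,dy$, and evaluating this limit requires knowing $\lim_{N\to\infty} N\,\Pb^\gamma_{1/N}(\xi_t\in dx,\,t<\tau)$. But this scaled sub-probability is, after integrating in $t$, the very non-equilibrium density the theorem is meant to identify, so the argument is circular as it stands; your fixation/extinction decomposition does not compute it. Concretely, the extinction-bound piece carries a factor $N(1-q_\gamma(1/N))\sim N$ multiplied by $\Pb^{*-\gamma}_{1-1/N}(\tau_1>t)$, and you have not established that this product has a finite limit, let alone what it equals---this is not a dominated-convergence issue but a missing identification. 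The paper sidesteps the difficulty altogether: writing $\int_0^{t\wedge\tau}f(\xi_u)\,du=\int_0^\tau f(\xi_u)\,1_{\{u\le t\}}\,du$ and applying the path reversal $v\mapsto 1-\xi_{\tau-v}$ (approximated by a conditioned diffusion $\eta$ with two-point initial law $\nu_N$) converts the age constraint $\{u\le t\}$ into $\{\tau_1-v\le t\}$; conditioning on $\eta_v$ then replaces this indicator by $\Pb^{*\gamma}_{\eta_v}(\tau_1\le t)$, and what remains is a standard occupation-time functional for the \emph{conditioned} process, computed via its Green function $G_*$ and passed to the limit as in \eqref{eq:thmkeylimit}.
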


\begin{proof} The equivalence of (\ref{eq:thmmainlimit}) and
  (\ref{eq:thmmainlimit2}) is immediate from 
\[
\Pb^{*\gamma}_{1-y}(\tau_1\le t)
=\frac{\Pb^\gamma_{1-y}(\tau_1\le t)}{q_\gamma(1-y)},
\]
hence
\[
\omega_\gamma \Pb^{*\gamma}_{1-y}(\tau_1\le t) \pi_\gamma(y)
=\frac{\Pb^\gamma_{1-y}(\tau_1\le t)}{1-y}\, \pi_0(y).
\]
To prove (\ref{eq:thmmainlimit}) we apply a time reversal technique.
Let $\nu_N$ be the two-state distribution which gives probability
$q_\gamma(1/N)$ to $1/N$ and $1-q_\gamma(1/N)$ to $1-1/N$ and assume
that $\{\eta_v,v\ge 0\}$ with distribution $\Pb_{\nu_N}^{*\gamma}$ is
a Wright-Fisher diffusion process with initial distribution $\nu_N$
and selection coefficient $\gamma$, which is conditioned on ultimate
fixation in state $1$.  The essence in the construction of $\eta$ is
to provide a close mimic of the time and space reversal of $\xi$
defined by $1-\xi_{\tau-v}$, $0\le v\le \tau$.  Next, take $f\in
\Fc_0$ and observe that $\widetilde \xi_t=1-\xi_t$ defines a
Wright-Fisher diffusion with selection coefficient $-\gamma$, initial
value $\widetilde \xi_0=1-1/N$, and the same absorption time as
$(\xi_t)$.  Hence
\[
\E_{1/N}^\gamma\int_0^\tau f(\xi_u)\,1_{\{u\le t\}}\,du
=\E_{1-1/N}^{-\gamma}\int_0^\tau f(1-\widetilde \xi_u)\,1_{\{u\le t\}}\,du.
\]
Reversing time, 
\[
\lim_{N\to\infty}\E^{-\gamma}_{1-1/N}
\int_0^\tau f(1-\widetilde \xi_u)\,1_{\{u\le t\}}\,du
=\lim_{N\to\infty}\E^{*\gamma}_{\nu_N}\int_0^{\tau_1} 
f(1-\eta_v)\,1_{\{\tau_1\le v+t\}}\,dv.
\]
Let $\Fc_v=\sigma(\eta_u,u\le v)$ be the minimal $\sigma$-algebra
generated by $(\eta_v)$. Then $\tau_1$ is an $(\Fc_v)$-stopping time
so, by conditioning,
\begin{align*}
\E^{*\gamma}_{\nu_N}\int_0^{\tau_1}
f(1-\eta_v)\,1_{\{\tau_1\le v+t\}}\,dv
=\E^{*\gamma}_{\nu_N}\int_0^{\tau_1} 
f(1-\eta_v)\,\E^{*\gamma}[1_{\{\tau_1\le v+t\}}|\Fc_v]\,dv
\end{align*}
Here, shifting $\tau_1$ to $\widetilde \tau_1=\tau_1-v$, with the same
distribution, we have  
\[
\E^{*\gamma}[1_{\{\tau_1\le v+t\}}|\Fc_v]=
   \E^{*\gamma}_{\eta_v}[1_{\{\widetilde\tau_1\le t\}}]
  = \Pb^{*\gamma}_{\eta_v}(\tau_1\le t)
\]
so that
\begin{align*}
\E^{*\gamma}_{\nu_N}\int_0^{\tau_1}
f(1-\eta_v)\,1_{\{\tau_1\le v+t\}}\,dv
=\E^{*\gamma}_{\nu_N}\int_0^{\tau_1} 
f(1-\eta_v)\,\Pb^{*\gamma}_{\eta_v}(\tau_1\le t)\,dv.
\end{align*}
Writing $g(x)=f(x)\,\Pb^{*\gamma}_{1-x}(\tau_1\le t)$, the integrand
on the right hand side in the previous expression takes the form
$g(1-\eta_v)$.  Hence, to complete the proof of (\ref{thm:main}), it
remains to show that, for any $g\in \Fc_0$,
\begin{equation}\label{eq:thmkeylimit} 
\lim_{N\to\infty}N\E^{*\gamma}_{\nu_N}\int_0^{\tau_1} g(1-\eta_v)\,dv
=\omega_\gamma\int_0^1 g(y)\pi_\gamma(y)\,dy.
\end{equation}
However, the representation of the time occupation functional in
(\ref{eq:scaledoccupationtime}) extends to the measure conditional on
fixation \citep{KarlinTaylor1981}. Namely,
\[
\E^{*\gamma}_x\int_0^{\tau_1} g(\xi_t)\,dt
=\int G_*(x,y)g(y)\,dy,
\]
where
\[
G_*(x,y)=\left\{
\begin{array}{lc} 
2S_\gamma(y)(S_\gamma(1)-S_\gamma(x)) \displaystyle{\frac{S_\gamma(y)m_\gamma(y)}{S_\gamma(1)S_\gamma(x)}}, & 0\le y\le x\le 1\\[4mm]
2S_\gamma(x)(S_\gamma(1)-S_\gamma(y)) \displaystyle{\frac{S_\gamma(y)m_\gamma(y)}{S_\gamma(1)S_\gamma(x)}}, & 0\le x\le
y\le 1. 
\end{array}
\right. 
\]
for the same scale function $S_\gamma$ and and speed function
$m_\gamma$ as in (\ref{eq:occupationtime}). Thus,
\begin{align*}
\E^{*\gamma}_{\nu_N}\int_0^{\tau_1} g(1-\eta_v)\,dv
&= q_\gamma(1/N) \int_0^1 
G_*(1/N,y)\,g(1-y)\,dy\\  
&\quad +(1-q_\gamma(1/N))\int_0^1 G_*(1-1/N,y)\, g(1-y)\,dy,  
\end{align*}
and so 
\begin{align}\nonumber
N\E^{*\gamma}_{\nu_N}\int_0^{\tau_1} g(1-\eta_v)\,dv
&\sim  \omega_\gamma\int_0^1 G_*(1/N,y)\,g(1-y)\,dy\\  
&\quad +N\int_0^1 G_*(1-1/N,y)\, g(1-y)\,dy.
\label{eq:thmmainsumup}
\end{align}
Here,  asymptotically as $N\to\infty$ 
\begin{align*}
 \int_0^1 G_*(1/N,y)&g(1-y)\,dy\\
& =\frac{(e^{-2\gamma/N}-e^{-2\gamma})}{\gamma(1-e^{-2\gamma/N})(1-e^{-2\gamma})}
\int_0^{1/N} \frac{(1-e^{-2\gamma y})^2g(1-y)}{y(1-y)e^{-2\gamma
    y}}\,dy\\
&\quad +\int_{1/N}^1 \frac{1-e^{-2\gamma(1-y)}}{\gamma y(1-y)} \frac{1-e^{-2\gamma y}}
{1-e^{-2\gamma}}g(1-y)\,dy\\
&\sim \frac{g(1)}{N}+\int_0^1 \frac{1-e^{-2\gamma(1-y)}}{\gamma y(1-y)} 
\frac{1-e^{-2\gamma y}}{1-e^{-2\gamma}}g(y)\,dy
\end{align*}
and
\begin{align*}
 \int_0^1 G_*(1-&1/N,y)g(1-y)\,dy\\
&= \frac{(e^{-2\gamma(1-1/N)}-e^{-2\gamma})}{\gamma(1-e^{-2\gamma(1-1/N)})(1-e^{-2\gamma})}\int_0^{1-1/N} \frac{(1-e^{-2\gamma y})^2g(1-y)}{y(1-y)e^{-2\gamma y}}\,dy\\
&\quad +\int_{1-1/N}^1 \frac{1-e^{-2\gamma(1-y)}}{\gamma y(1-y)}
\frac{1-e^{-2\gamma y}} {1-e^{-2\gamma}}g(1-y)\,dy\\
&\sim \frac{\omega_\gamma}{N} \int_0^1 \frac{1-e^{-2\gamma(1-y)}}{\gamma y(1-y)} 
\frac{e^{-2\gamma y}-e^{-2\gamma}}{1-e^{-2\gamma}}g(y)\,dy+\int_0^{1/N}g(y)\,dy.
\end{align*}
By adding up the terms in (\ref{eq:thmmainsumup}) and using $g(0)=0$ and
$g(1)$ bounded we obtain
\[
\lim_{N\to\infty}N\E^{*\gamma}_{\nu_N}\int_0^{\tau_1} g(1-\eta_v)\,dv
=\omega_\gamma \int_0^1 g(y)\pi_\gamma(y)\,dy,
\]
as required to verify (\ref{eq:thmkeylimit}) and hence (\ref{eq:thmmainlimit}).
The extension from $\Fc_0$ to $\Fc$ now follows by an application of 
Lemma \ref{lem:extendmain} in Section \ref{sec:technical}.
\end{proof}

The representation of the time-dependent AFS in terms of the
probability distribution of the time to fixation allows for exact
analytical solutions  (Figure \ref{fig:afs}). 
\begin{figure}
\includegraphics[width=0.5\textwidth]{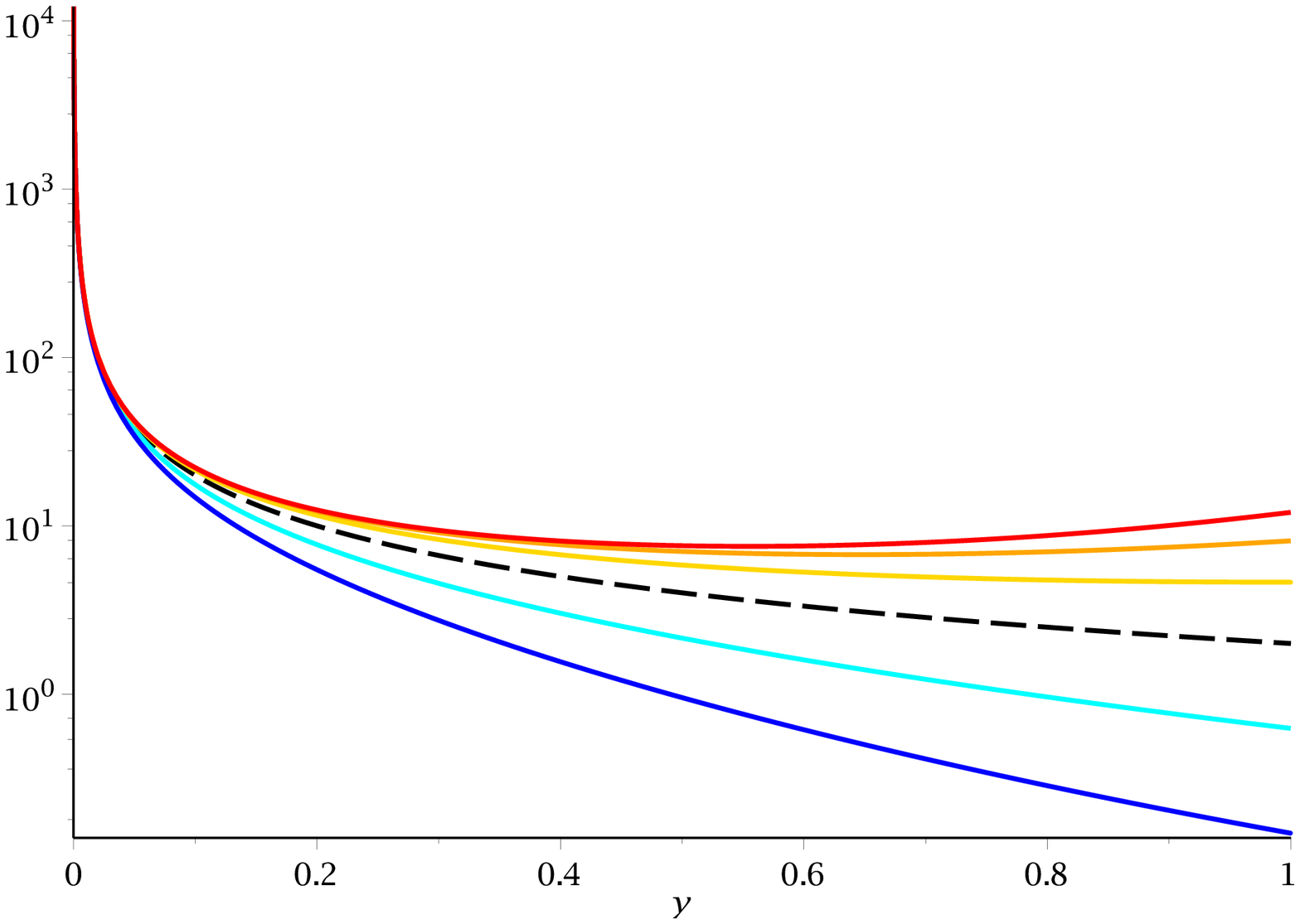}
\includegraphics[width=0.5\textwidth]{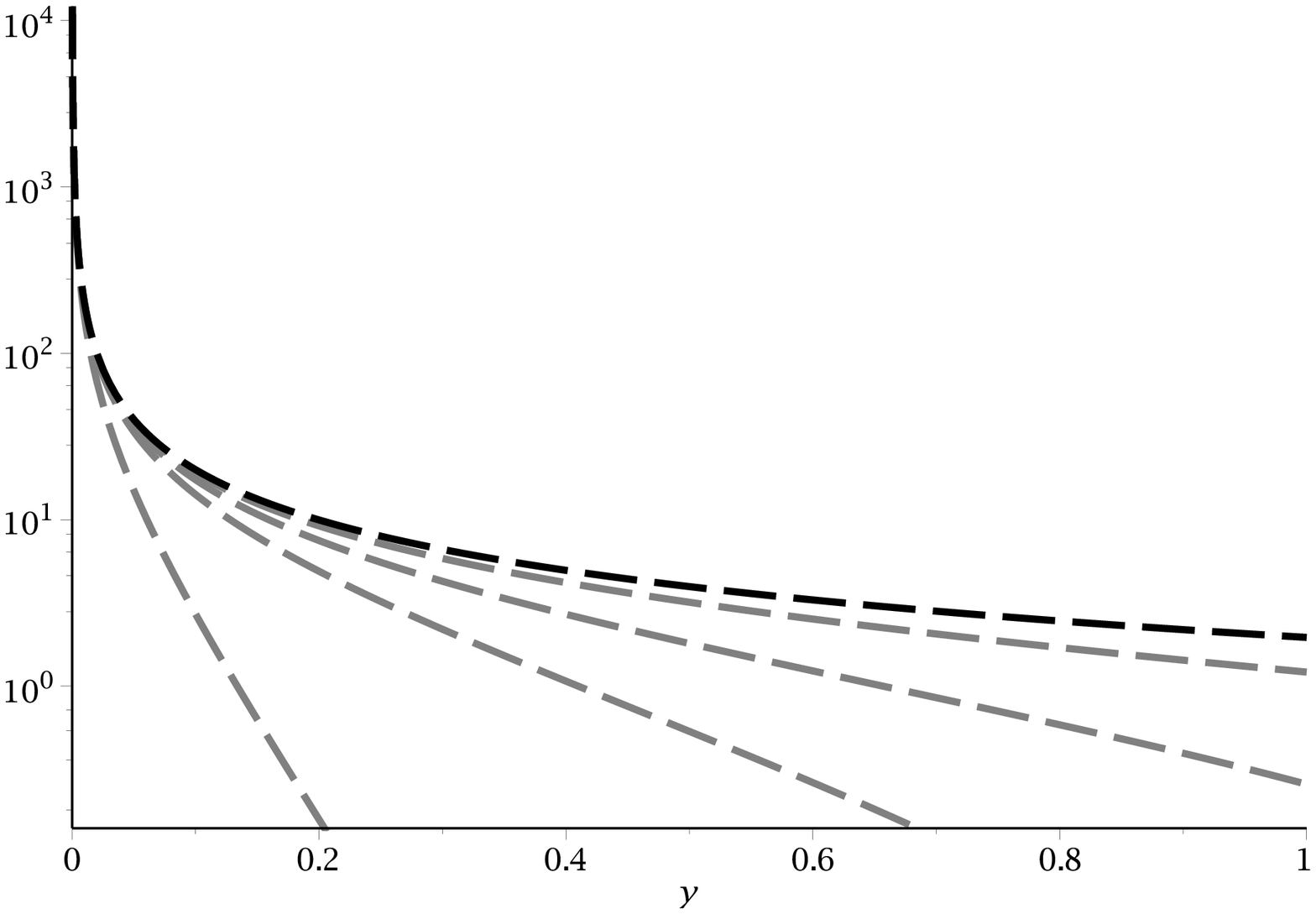}
\caption{Left panel: the stationary AFS
for six scenarios of selection, $\gamma=-2$ (blue solid line), $\gamma=-1$
(turquoise solid line), $\gamma=0$ (black dashed line),  $\gamma=1$
(yellow solid line), $\gamma=2$ (orange solid line) and $\gamma=3$
(red solid line). Right panel:  the build-up of
the AFS starting from a completely mono-allelic state at $t=0$ for the
neutral case, $\gamma=0$. The gray dashed lines represent the AFS for the time
points $t=0.1$, $t=0.5$, $t=1$ and $t=2$ in increasing order. The black dashed
line represents the equilibrium AFS.}
\label{fig:afs}
\end{figure}
To motivate the extension to the case $\Fc$ in Theorem \ref{thm:main},
we mention $f_\mathrm{fix}\in\Fc$ defined by
$f_\mathrm{fix}(y)=1_{\{y=1\}}$. Then
\[
\langle \Xc_t^N\!,f_\mathrm{fix}\rangle=\mbox{\# of 
  alleles that reach fixation in $[0,t]$}
\]
and
\[
\lim_{N\to\infty} \E\langle \Xc_t^N\!,f_\mathrm{fix}\rangle
= -\theta\omega_\gamma \int_0^1  q_\gamma(y)\,
\Pb_{1-y}^{*\gamma}(\tau_1\le t)\pi_\gamma(y)\,dy+\theta \omega_\gamma\, t.
\]
But, clearly, $\int_0^{t\wedge\tau} f_\mathrm{fix}(\xi_u)\,du=0$
and so, by (\ref{eq:thmmainlimit3}),
\[
\lim_{N\to\infty} \E\langle \Xc_t^N\!,f_\mathrm{fix}\rangle
=\theta \omega_\gamma\, t
-\theta \omega_\gamma \,\int_0^t \Pb^{*\gamma}_0(\tau_1> s)\,ds
=\theta\omega_\gamma(t-\E^{*\gamma}_0(\tau_1\wedge t)).
\]
In contrast to the work by \cite{Sawyer1992}, who start from a stationary state,
this relation, provides the precise growth in the number of fixations
starting from a completely mono-allelic population at time $t=0$. Thus, it can,
for example, be applied to study the number of fixations of lineage-specific
mutations after a population split \citep{Mugal2014}.

\subsection{Representation using the duality relationship}

In this subsection, we establish an additional representation of the
limit functional in Theorem \ref{thm:main}, valid for the neutral scenario or
the case of negative selection $\gamma\le 0$. This representation of the
non-equilibrium AFS results by rewriting
(\ref{eq:thmmainlimit}), equivalently (\ref{eq:thmmainlimit2}), using
the duality between the Wright-Fisher diffusion process and a class of
birth-death processes generalizing Kingman's coalescent process.
Kingman's pure death coalescent process \citep{Kingman1982}, is the
Markov process $(A_t)_{t\ge 0}$ defined by jump rates
\begin{equation}\label{eq:kingmanintensity}
\Pb^0(A_{t+h}-A_t=-1|A_t=k)=\binom{k}{2}h+o(h),\quad h\to 0.
\end{equation}
Writing $\Pb_m^0$ and $\E^0_m$ for the conditional law and expectation
given $A_0=m$, Kingman's coalescent $(A_t)$ is a
dual process of the neutral Wright-Fisher diffusion $(\xi_t)_{t\ge 0}$,
in the sense
\[
\E^0_x(\xi_t^m)=\E^0_m(x^{A_t})
\]
\citep{Tavare1984}, and since the distributional properties of $(A_t)$
are known, the duality relation provides an important
computational tool. Indeed, \cite{Griffiths1979} and \cite{Tavare1984}
showed that, under $\Pb_m^0$, 
the Markov transition probabilities of $A_t$ have the representations
\begin{align}\label{eq:distrpassage}
\Pb^0_m(A_t=1)&=1-\sum_{k=2}^m e^{-\binom{k}{2}t}
(2k-1)(-1)^k\frac{m_{[k]}}{m_{(k)}}\\
\Pb^0_m(A_t=j)&=
\sum_{k=j}^m e^{-\binom{k}{2}t}\,\frac{(2k-1)(-1)^{k-j}j_{(k-1)}
}{j!(k-j)!}\frac{m_{[k]}}{m_{(k)}},\quad 2\le j\le m,
\label{eq:distrkingman}
\end{align}
with increasing and decreasing factorials defined as
\[
m_{[k]}=\frac{\Gamma(m+1)}{\Gamma(m-k+1)},\quad
m_{(k)}=\frac{\Gamma(m+k)}{\Gamma(m)}.
\]
Furthermore, 
\begin{equation} \label{eq:expkingman}
\E^0_m(A_t)=1+\sum_{k=2}^m e^{-\binom{k}{2}t}
(2k-1)\frac{m_{[k]}}{m_{(k)}}.
\end{equation}
These formulas remain valid as $m\to\infty$ with the replacement
$m_{[k]}/m_{(k)}\to 1$, for example
\[
\E^0_\infty(A_t)=1+\sum_{k=2}^\infty e^{-\binom{k}{2}t}(2k-1).
\]

The moment duality relation between the neutral Wright-Fisher
diffusion process and Kingman's coalescent process extends to duality
between the Wright-Fisher diffusion process with selection and a wider
class of birth-and-death processes, for which we keep the notation
$A_t$ \citep{Shiga1986,Athreya2005}. For $\gamma\ge 0$,
$(A_t)_{t\ge 0}$ is a birth-death process with linear birth intensity
$\gamma$, such that
\begin{equation}\label{eq:branchingintensity}
\Pb(\eta_{t+h}-\eta_t=1|\eta_t=k)=\gamma kh+o(h),\quad h\to 0,
\end{equation}
and death intensity the same as in (\ref{eq:kingmanintensity}).  Then
$A_t$ possesses a steady-state $A_\infty$, which has the distribution
of a Poisson random variable with mean $2\gamma$, conditioned to stay
positive. The duality relation is 
\[
\E^\gamma_x[(1-\xi_t)^m]=\E^\gamma_m[(1-x)^{A_t}].
\]
By symmetry,
\[
\E^{-\gamma}_x[\xi_t^m] = \E^\gamma_{1-x}[(1-\xi_t)^m].
\]
Hence, if we now switch to the case $\gamma\le 0$, 
\[
\E^\gamma_{1-x}[\xi_t^m]=\E^{-\gamma}_m[(1-x)^{A_t}].
\]
This relation applied to the representation of the non-equilibrium AFS
in (\ref{eq:thmmainlimit2}) yields
\[
\Pb^\gamma_{1-y}(\tau_1\le  t)
=\Pb^\gamma_{1-y}(\xi_t=1)
=\lim_{m\to\infty} \E^\gamma_{1-y}(\xi_t^m)
=\E^{-\gamma}_\infty[(1-y)^{A_t}]
\]
so that 
\[
\frac{\Pb^\gamma_{1-y}(\tau_1\le t)}{1-y}
=  \E^{-\gamma}_\infty[(1-y)^{A_t-1}]
\]
and we obtain the following alternative representation of the
non-equilibrium AFS in Theorem 1. 

\begin{corollary}\label{cor:main}
For the case of neutral evolution or negative selection, $\gamma\le 0$, the
non-equilibrium AFS in Theorem \ref{thm:main} has the representation
\[ 
\omega_\gamma\int_0^1 f(y) \,\Pb^{*\gamma}_{1-y}(\tau_1\le t) \pi_\gamma(y)\,dy
=\int_0^1 \E^{-\gamma}_\infty[(1-y)^{A_t-1}]f(y) \, \pi_0(y)\,dy
\] 
\end{corollary}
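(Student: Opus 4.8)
\emph{Proof plan.} The idea is to splice together two facts the exposition has already isolated: the kernel identity sitting inside Theorem~\ref{thm:main}, namely the equivalence of (\ref{eq:thmmainlimit}) and (\ref{eq:thmmainlimit2}), which reads $\omega_\gamma\,\Pb^{*\gamma}_{1-y}(\tau_1\le t)\,\pi_\gamma(y)=\Pb^\gamma_{1-y}(\tau_1\le t)\,\pi_0(y)/(1-y)$; and the moment duality between the selective Wright--Fisher diffusion and the birth--death process $(A_t)$. So the first step is bookkeeping: using that equivalence, rewrite the left-hand side of the asserted representation as $\int_0^1 f(y)\,\Pb^\gamma_{1-y}(\tau_1\le t)\,\pi_0(y)/(1-y)\,dy$, after which it suffices to establish the pointwise identity
\[
\frac{\Pb^\gamma_{1-y}(\tau_1\le t)}{1-y}=\E^{-\gamma}_\infty\big[(1-y)^{A_t-1}\big],\qquad 0<y<1 .
\]

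To prove this I would first express the fixation probability as a diffusion moment. Since $1$ is an absorbing (exit) boundary for $(\xi_t)$, the events $\{\xi_t=1\}$ and $\{\tau_1\le t\}$ coincide, and for $0\le\xi_t\le1$ the powers $\xi_t^m$ decrease to $1_{\{\xi_t=1\}}$ as $m\to\infty$, so bounded convergence gives $\Pb^\gamma_{1-y}(\tau_1\le t)=\lim_m\E^\gamma_{1-y}[\xi_t^m]$. For $\gamma\le0$ the duality relation derived above, $\E^\gamma_{1-y}[\xi_t^m]=\E^{-\gamma}_m[(1-y)^{A_t}]$, converts each finite moment into the corresponding functional of the birth--death process started from $m$ lineages, with linear birth rate $-\gamma\ge0$ and death rate $\binom{k}{2}$. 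It then remains to pass $m\to\infty$ on the birth--death side: a one-dimensional birth--death chain is stochastically monotone in its starting point, so $A_t$ under $\Pb^{-\gamma}_m$ is stochastically increasing in $m$ and converges in law to a limit $\Pb^{-\gamma}_\infty$; since $j\mapsto(1-y)^j$ is bounded and decreasing, $\E^{-\gamma}_m[(1-y)^{A_t}]$ converges monotonically to $\E^{-\gamma}_\infty[(1-y)^{A_t}]$. Because $A_t\ge1$ throughout, one factor $1-y$ splits off, giving $\Pb^\gamma_{1-y}(\tau_1\le t)=(1-y)\,\E^{-\gamma}_\infty[(1-y)^{A_t-1}]$, which is precisely the pointwise identity above; substituting it back under the integral yields the claimed representation.

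The only point that is not mere bookkeeping is the limit $m\to\infty$ on the birth--death side when $\gamma<0$: one must know that $(A_t)$ is well defined when started ``from infinity'' and that the finite-$m$ marginals converge to it. This is the classical coming-down-from-infinity estimate, $\sum_{k\ge2}1/\binom{k}{2}<\infty$, which forces $A_t<\infty$ almost surely under $\Pb^{-\gamma}_\infty$ for every $t>0$; the linear birth rate $-\gamma k$ is dominated by the quadratic death rate and does not disturb this. For the neutral case $\gamma=0$ the passage is immediate from the explicit transition probabilities (\ref{eq:distrpassage})--(\ref{eq:distrkingman}) together with $m_{[k]}/m_{(k)}\to1$, and in that case the identity $\Pb^{*0}_{1-y}(\tau_1\le t)=\E^0_\infty[(1-y)^{A_t-1}]$ can alternatively be verified termwise against Kimura's series (\ref{eq:kimurafixation}).
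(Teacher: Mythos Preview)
Your proposal is correct and follows essentially the same route as the paper: the paper likewise reduces the identity, via the equivalence of (\ref{eq:thmmainlimit}) and (\ref{eq:thmmainlimit2}), to the pointwise relation $\Pb^\gamma_{1-y}(\tau_1\le t)/(1-y)=\E^{-\gamma}_\infty[(1-y)^{A_t-1}]$, obtains $\Pb^\gamma_{1-y}(\tau_1\le t)=\lim_m\E^\gamma_{1-y}[\xi_t^m]$ from $\{\tau_1\le t\}=\{\xi_t=1\}$, and then invokes the duality $\E^\gamma_{1-y}[\xi_t^m]=\E^{-\gamma}_m[(1-y)^{A_t}]$ before passing $m\to\infty$. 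Your write-up is in fact more careful than the paper's in justifying the last step via stochastic monotonicity and the coming-down-from-infinity criterion, which the paper leaves implicit.
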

As a remark on the computational aspects of this representation, the
probabilities $\Pb^0_\infty(A_t=j)$ for the case $\gamma=0$ are those
of (\ref{eq:distrpassage},\ref{eq:distrkingman}) in the limit
$m\to\infty$.  For the negative case $\gamma<0$, one approach might be
averaging over a large number of simulated paths of $(A_t)$.

\section{Population functionals and sample functionals}

At this stage, we have used the discrete time Markov chain
$(X_n)$, the time-scaled allele frequencies $(X_N(t))$, the continuous
time and continuous state scaled version $(Y_{N,L}(t))$, the random
field $(\Xc_t^{N,L})$, and the Poisson stochastic integral
$(\Xc_t^N)$, to study non-equilibrium allele frequencies. Moreover,
stationary versions $\Xc_\infty^N$ and $\Xc_\infty$ appear in the
large time limit. Figure \ref{fig:scheme} shows the relation of the
various random fields and the corresponding expectations. The same
sequence of approximations apply to building other functionals of the
allele frequencies, where it is natural to distinguish between
population functionals and sample functionals. Population functionals
are in principal non-observable and require knowing the history of the
entire spectrum of allele frequencies in each site counted as
fractions of the entire population. Sample functionals refer to the spectrum
of frequencies being restricted to a smaller sample of individuals, in the
sense of fixing an integer $m\ge 1$ and consider a sample of $m$
sequences chosen randomly with equal probabilities for all subsets of size $m$.

\begin{figure}\label{fig:scheme}
\begin{tikzpicture}
 \matrix (m) [matrix of math nodes,
    row sep = 2em, column sep=1em, minimum width=7em, minimum height=4em]
  {  
          \langle \Xc_t^{N,L}\!,f\rangle 
          & {} 
          & {} \\
          \langle \Xc_t^N\!,f\rangle 
          &  \langle \Xc_\infty^N,f\rangle 
          & \langle \Xc_\infty,f\rangle \\
          \E^\gamma\langle \Xc_t^N\!,f\rangle 
          &  \E^\gamma\langle \Xc_\infty^N,f\rangle
          &  \E^\gamma\langle \Xc_\infty,f\rangle  \\ 
          \theta\omega_\gamma\int_0^1\Pb^{*\gamma}_{1-y}(\tau_1\le
               t)f(y)\pi_\gamma(y)\,dy\quad 
          & {} 
          &  \quad\theta\omega_\gamma\int_0^1 f(y)\pi_\gamma(y)\,dy\\ 
      };
  \path[-stealth]
    (m-1-1) edge [double] node [right] {$L\to\infty$} (m-2-1)
    (m-2-1) edge node [left] {} (m-3-1)
            edge [double] node [below] {$t\to\infty$} (m-2-2)
    (m-2-2) edge [double] node [below] {$N\to\infty$} (m-2-3)
            edge   node   {} (m-3-2)
    (m-2-3) edge node [right] {} (m-3-3)           
    (m-3-1) edge [double]   node [right] {$N\to\infty$} (m-4-1)
            edge [double] node [below] {$t\to\infty$} (m-3-2)
    (m-3-2) edge [double] node [below] {$N\to\infty$} (m-3-3)
            edge [double] node [left] {$N\to\infty$} (m-4-3)
    (m-3-3)    edge node [below] {}   (m-4-3)
    (m-4-3)    edge node [below] {}   (m-3-3)
    (m-4-1)  edge [double] node [below] {$t\to\infty$} (m-4-3)
;
\end{tikzpicture}
\caption{Schematic structure of the random fields (upper two rows),
  their expectations (third row) and limiting expectations (third and
  fourth row). Going from left to right the graph indicates the
  various non-equilibrium and equilibrium versions.}
\end{figure}

To identify the proper sampling functionals we consider the discrete
generation version of the population model. 
Consider a sample of $m$ sequences. Let
\[
M^i_n=\mbox{\# of sampled derived alleles in site $i$ of
  generation $n$}. 
\]
Conditionally, for each $n$ given $X_n$, the random variables $M_n^i$,
$1\le i\le L$, are i.i.d.\ and suitably approximated by the binomial
distribution Bin$(m,X^i_n/N)$. With these insights drawn from the
discrete generation model it follows that $M_{[Nt]}^i$ is binomial
with parameters $m$ and $N^{-1}X^i_{[Nt]}$, where the latter is a
discrete approximation of $Y^i_{N,L}(t)$. Considering the map
\[
t\mapsto \langle\Mc^{N,L}_t\!,g\rangle
=\sum_{i=1}^L g(M_{N,L}^i(t)),\quad M_{N,L}^i(t)\sim 
\mbox{Bin}(m,Y_{N,L}^i(t)),
\]    
for $g:\Z\to \R$, $g(0)=0$, in analogy with Proposition
\ref{prop:Ltoinfty} we obtain as $L\to\infty$ a limiting random field
$\{\langle\Mc^N_t\!,g\rangle,\;t\ge 0\}$, such that
\[
\Mc^N_t=\sum_{k=1}^m \delta_{Z^{N,k}_t}
\]
and the family of random variables
\[
Z^{N,k}_t
=\mbox{\# of sites with exactly $k$ derived in the sample},
\quad 1\le k\le m, 
\]
are independent and Poisson distributed with mean 
\[
\E^\gamma(Z_t^{N,k})=
\theta \binom{m}{k}\, N\E^\gamma_{1/N}\Big[\int_0^{t\wedge\tau} \xi_u^k(1-\xi_u)^{m-k}\,du\Big].
\]
Moreover, the sample functional $\Mc_t^N$ has a stationary version
$\Mc_\infty^N$ such that the family $Z^{N,k}_\infty$, $1\le k\le m$,
are independent Poisson with mean
\[
\E^\gamma(Z_\infty^{N,k})=
\theta \binom{m}{k}\, N\E^\gamma_{1/N}\Big[\int_0^\tau \xi_u^k(1-\xi_u)^{m-k}\,du\Big].
\]
It is outside the scope of this work to study any limiting family of
random processes, $\{(Z^k_t)_{t\ge 0},1\le k\le m\}$, that might arise as
$N\to\infty$.  As a consequence of Theorem \ref{thm:main}, however, we
do obtain the following additional results.  As $N\to\infty$, the
Poisson expectations have limits
\begin{align}\label{eq:segsites_k}
\E^\gamma(Z_t^{N,k})\to 
\theta \binom{m}{k}\,\omega_\gamma \int_0^1
\Pb^{*\gamma}_{1-y}(\tau_1\le t)y^k(1-y)^{m-k}\pi_\gamma(y)\,dy
\end{align} 
and
\[
\E^\gamma(Z_\infty^{N,k})\to 
\theta \binom{m}{k}\, \omega_\gamma\int_0^1 y^k (1-y)^{m-k}\,\pi_\gamma(y)\,dy.
\]

As an application we fix the sample size $m$ and consider 
\[
Z_m^N(t)=\sum_{k=1}^{m-1} Z^{N,k}_t=\mbox{\# of segregating
  sites in sample of size $m$}. 
\] 
Using notation $a_m$ for the binomial formula 
\[
a_m(y)=\sum_{j=1}^{m-1} \binom{m}{j} y^j(1-y)^{m-j}= 1-y^m-(1-y)^m
,\quad 0\le y\le 1,
\]
it follows that the summation $Z_m^N(t)$ is Poisson distributed with
expected value
\[
\E^\gamma Z_m^N(t)=N\theta 
\E^\gamma_{1/N}\Big[\int_0^{t\wedge\tau} a_m(\xi_u)\,du\Big].
\]
Since $a_m\in \Fc_0$, the limiting expected number of segregating sites,
\[
S_m(t) =\lim_{N\to\infty}\E^\gamma Z_m^N(t),
\] 
is now obtained from Theorem \ref{thm:main} as
\[
S_m(t)=\theta \omega_\gamma
\int_0^1 \Pb^{*\gamma}_{1-y}(\tau_1\le t)a_m(y) \pi_\gamma(y)\,dy.
\]
Moreover, if $\gamma\le 0$ then by Corollary \ref{cor:main} 
\[
S_m(t)
=\theta \int_0^1 E^{-\gamma}_\infty[(1-y)^{A_t-1}]\,a_m(y) \, \pi_0(y)\,dy,
\]
and hence, by evaluating the integral,
\begin{align*}
S_m(t)&=2\theta\,E^{-\gamma}_\infty\Big[\sum_{k=0}^{m-1}\frac{1}{k+A_t}-\frac{\Gamma(A_t)\Gamma(m)}{\Gamma(A_t+m)}\Big]\\
&\approx 2\theta  E^{-\gamma}_\infty[\log(1+(m-1)/A_t)].
\end{align*}
Under neutral evolution $\gamma=0$, one may use
(\ref{eq:kimurafixation}) or (\ref{eq:distrkingman}) to obtain series
representations of the limiting expressions of $S_m(t)$. Yet another
representation of the same quantity follows from
\begin{align*}
\E^0 Z_m^N(t)&=\theta\int_0^tN\E^0_{1/N}[a_m(\xi_u)]\,du\\
  &=\theta\int_0^t N\E_m^0[1-(1/N)^{A_u}-(1-1/N)^{A_u}]\,du\\
&\to \theta\int_0^t (\E^0_m (A_u)-\Pb^0_m(A_u=1))\,du,\quad N\to\infty.
\end{align*}
Thus, by (\ref{eq:distrpassage}) and (\ref{eq:expkingman}),
\[
S_m(t)=\theta \sum_{k=2}^m (1+(-1)^k)(1-e^{-\binom{k}{2}t})
\frac{2k-1}{\binom{k}{2}}\frac{m_{[k]}}{m_{(k)}},
\]
and we recover the fixed population size version of an expression
which can be found in \cite{Tajima1989} and \cite{Zivkovic2011}. 
For $\gamma\not=0$, the seemingly crude but straightforward approximation  
\begin{equation}\label{eq:approxbyneutral}
S_m(t)\approx \theta \omega_\gamma
\int_0^1 \Pb^{*0}_{1-y}(\tau_1\le t)a_m(y) \pi_\gamma(y)\,dy,
\end{equation}
obtained by simply replacing the conditional distribution in
(\ref{eq:thmmainlimit}) with its neutral, explicitly known, version
(\ref{eq:kimurafixation}), appears quite efficient and useful for
many purposes.   In steady-state, letting $t\to\infty$,
\[
S_m(t)\to S_m= \theta \omega_\gamma\int_0^1 a_m(y)\,\pi_\gamma(dy).
\]
The neutral case $\gamma=0$ yields the familiar relation
\[
S_m= 2\theta \sum_{k=1}^{m-1}\frac{1}{k}\approx 2\theta \ln m.
\]
As an illustration of these findings, Figure \ref{fig:segsites} shows
the non-equilibrium growth over time of the limiting expected number
of segregating sites.

\begin{figure}
\includegraphics[width=0.5\textwidth]{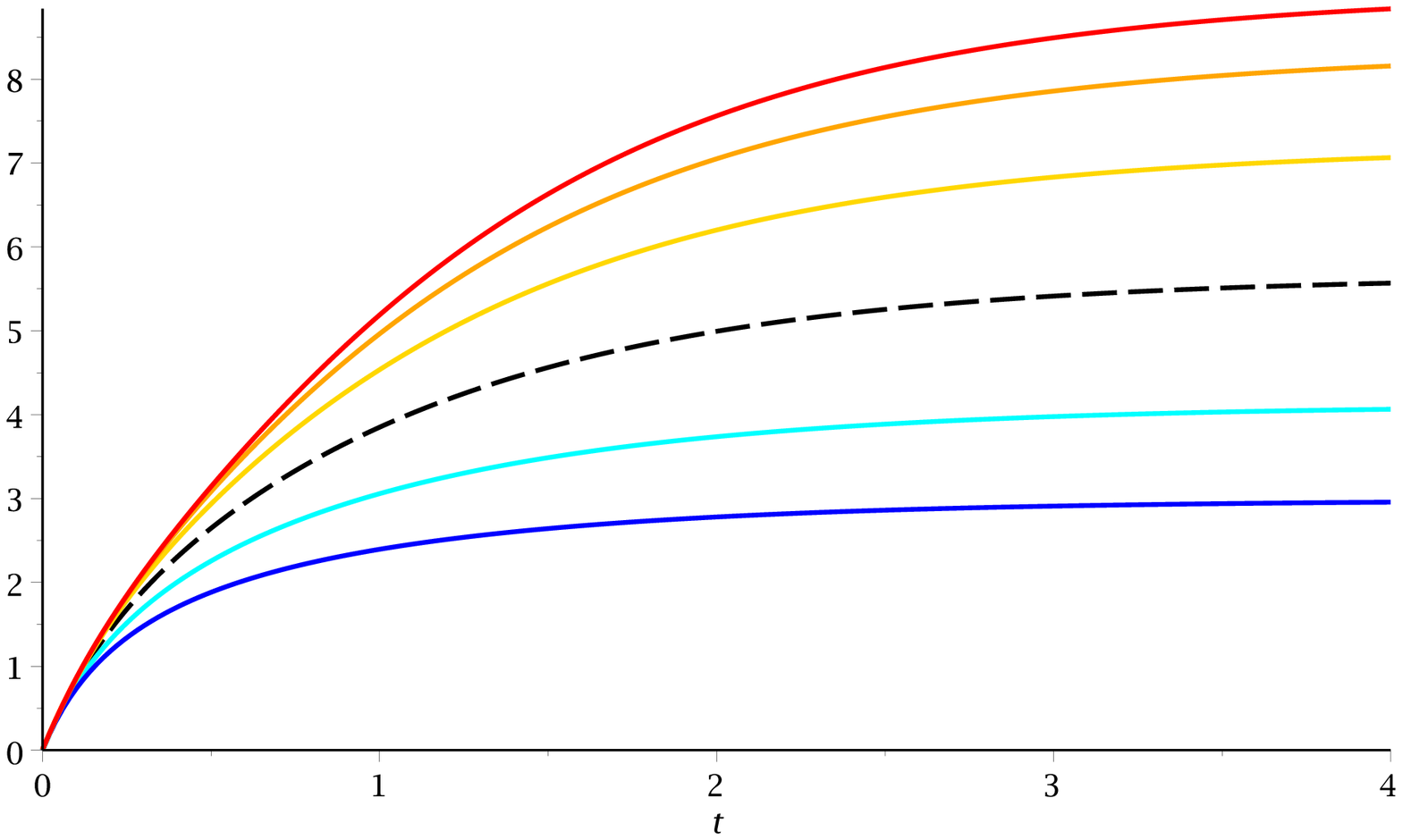}
\includegraphics[width=0.5\textwidth]{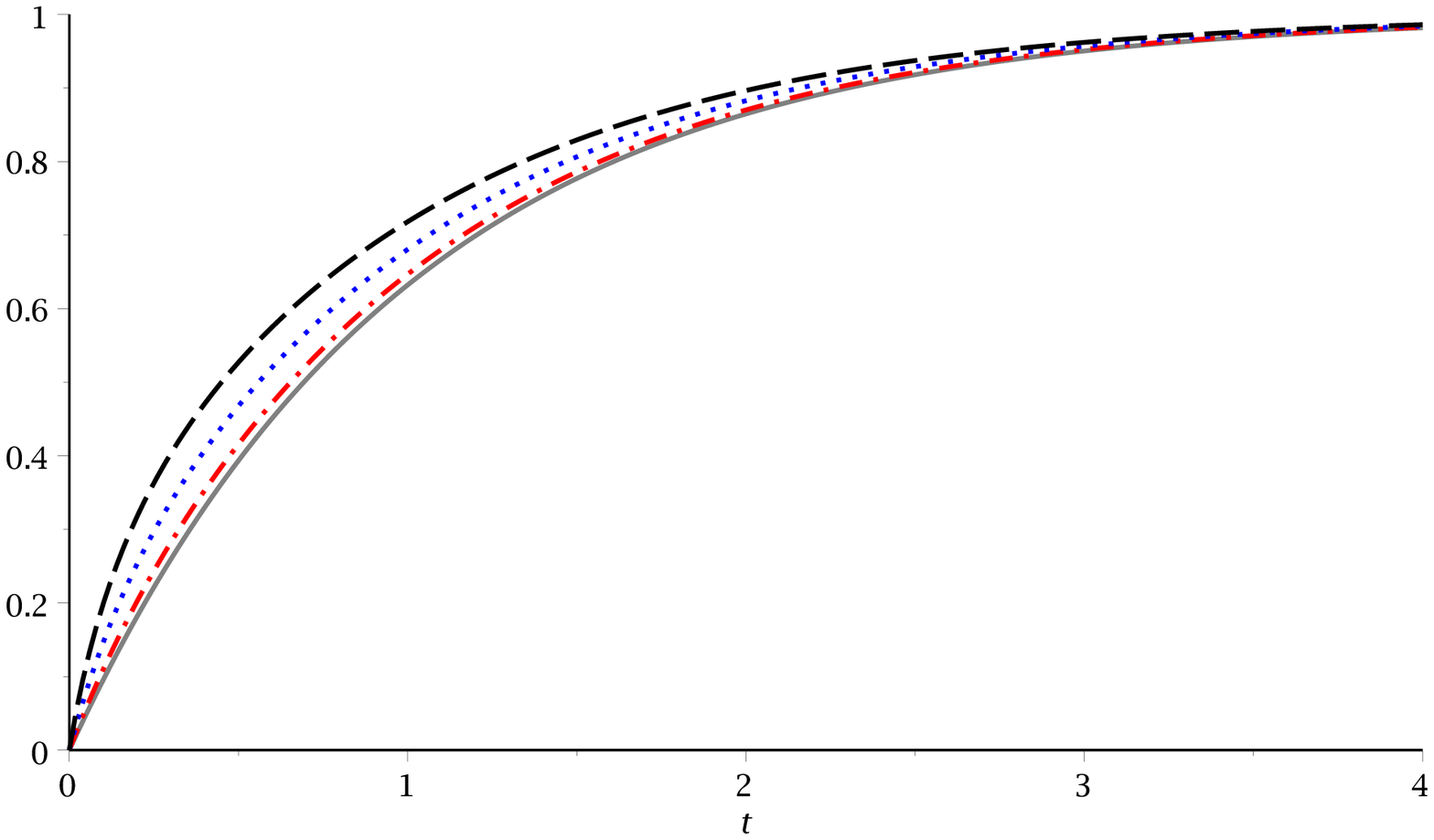}
\caption{Left panel: the number of segregating sites, $\{S_m(t),
  0\le t\le 4\}$, in a sample of $m=10$ individuals, using the
  approximation (\ref{eq:approxbyneutral}), for six scenarios of
  selection, $\gamma=-2$ (blue solid line), $\gamma=-1$ (turquoise
  solid line), $\gamma=0$ (black dashed line), $\gamma=1$ (yellow
  solid line), $\gamma=2$ (orange solid line) and $\gamma=3$ (red
  solid line).  Right panel: illustration of the effect of sample size
  for the neutral case $\gamma=0$. The normalized number of
  segregating sites, $\{S_m(t)/S_m, 0\le t\le 4\}$,  $m=2$
  (green solid line), $m=5$ (red dashed-dotted line),
  $m=10$ (blue dotted line), and $m=20$
  (black dashed line).  }
\label{fig:segsites}
\end{figure}

\section{Discussion} 
Several allele frequency-based summary statistics are central
measurements in population genetic studies. In the study of a single
population allele frequency-based measurements assist, for example,
the identification of candidate loci of adaptive evolution
\citep{Tajima1989,Fay2000,Zeng2006}, or the inference of the
demographic history of the population \citep{Excoffier2013}.  In the
study of speciation events, the consideration of the joint AFS of
closely related species is of considerable interest for the underlying
mechanics of speciation \citep{Gutenkunst2009}.  Moreover, measures of
population differentiation, which depend on the comparison of the AFS
between two populations descending from a common ancestor, are
commonly used to detect genomic regions involved in the process of
speciation \citep{Seehausen2014}. From a theoretical viewpoint these
kind of inferences require a sound analytical understanding of the
non-equilibrium properties of the AFS as a function of time starting
from any specific initial value.

\cite{Evans2007} consider the non-equilibrium AFS in a population of
size $2N\rho$ with mutation rate $\theta'$ in a bi-allelic setting
with allele frequencies governed by a general class of diffusion
processes, and where asymptotically in the scaling limit $N\to\infty$
the factor $\rho=\rho(t)$ may depend on time.  In this setting they
derive a limiting function $f(t,x)$ such that
\[
f(t,x)\,dx \sim \mbox{expected fraction of alleles
  of frequency $[x,x+dx]$},\quad 0<x<1.
\]
For the case of the Wright-Fisher diffusion process with selection
parameter $S$ and varying population size $\rho(t)$, the function
$f(t,x)$ is characterized by the property that the scaled function
$g(t,x)=x(1-x) f(t,x)$ is a solution of
\[
\frac{\partial }{\partial t} g(t,x)=
-S x(1-x) \frac{\partial }{\partial x} g(t,x)
+\frac{x(1-x)}{2\rho(t)}\frac{\partial^2 }{\partial x^2} g(t,x)
\]
with appropriate initial value $g(0,x)=g_0(x)$ at $t=0$ and boundary conditions
\[
\lim_{x\to 0} g(t,x)=\theta' \rho(t),\quad 
\lim_{x\to 1} g(t,x)=0.
\]
Comparing this result for the case $\rho(t)=1$, $S=\gamma$,
$\theta'=2\theta$ and $g_0=0$ with our representation
(\ref{eq:thmmainlimit2}) of the AFS in Theorem 1, leads to identifying
the implicitly given function $g$ as
\[
g(t,x)=2\theta \,\Pb^\gamma_{1-x}(\tau_1\le t).
\]
Similarly, using (\ref{eq:thmmainlimit}), one obtains the representation
\[
f(t,x)=2\theta \,\Pb^{*\gamma}_{1-x}(\tau_1\le
t)\,\frac{1-e^{-2\gamma (1-x)}}{x(1-x)(1-e^{-2\gamma})},
\]
reminding that $\Pb_{1-x}^\gamma$ is the probability law of the
Wright-Fisher diffusion process with initial state $1-x$, $\tau_1$ is
the time to fixation in state $1$ of the same diffusion, and
$\Pb^{*\gamma}_{1-x}$ is the conditional law given $\tau_1<\infty$.
In particular, for the neutral case $\gamma=0$,
\[
f(t,x)=2\theta\, \Pb^{*0}_{1-x}(\tau_1\le t)/x
\] 
with the probability distribution function given explicitly in
(\ref{eq:kimurafixation}). For non-positive selection, $\gamma\le 0$,
using the framework of duality theory of moment functionals, and
recalling the birth-death process $(A_t)$ with intensities given in
(\ref{eq:kingmanintensity}) and (\ref{eq:branchingintensity}), an
alternative representation is given by
\[
f(t,x)= \frac{2\theta \,\E^{-\gamma}[(1-x)^{A_t}|A_0\to \infty]}{x(1-x)}
\]
in terms of $A_t$ ``coming down from infinity'' at time $t=0$. 
We thus provide exact analytical results on the non-equilibrium AFS,
which if efficiently applied can improve our understanding of
a broad spectrum of population genetic inferences.

Our framework can further be related to the computational method devised by
\cite{Evans2007}, which uses the fact that the moments
\[
\mu_n(t)=\int_0^1 x^n g(t,x)\,dx,\quad n=0,1,\dots
\]
satisfy a coupled system of ordinary differential equations. Assuming
constant population size, these moments can be identified within our
framework by 
\[
\mu_n(t)=\lim_{N\to\infty} \E\langle \Xc_t^N\!,b_n\rangle,\quad
b_n(y)=(1-y)y^{n+1}.
\]
Since $b_n\in \Fc_0$, by Theorem \ref{thm:main}, 
\[
\mu_n(t)=\theta\omega_\gamma\int_0^1 b_n(y) \,\Pb^{*\gamma}_{1-y}(\tau_1\le t)
\pi_\gamma(y)\,dy.
\]

Moreover, the analytical description of the build-up of the AFS
starting from a completely mono-allelic state at $t=0$, can be helpful
to study the build-up of lineage-specific polymorphisms as compared to
the ebbing of shared ancestral polymorphisms during the process of
speciation.  Our work therefore ties to the work of \cite{Wakeley1997}
and \cite{Chen2012}, who model the separate AFS of lineage-specific
and shared ancestral polymorphisms in samples from closely related
species in the framework of coalescent theory.  Precisely, our
relation (\ref{eq:segsites_k}) corresponds to the AFS of
lineage-specific polymorphisms in \cite{Chen2012} (5), which in our
notation reads
\[
\lim_{N\to\infty} \E(Z_t^{N,k})= \theta\,\E^0_m\Big[\sum_{i=A_t}^m
\frac{\binom{m-k-1}{i-2}}{\binom{m-1}{i-1}}i\,\E(T_i|A_t)\Big],
\]
where $T_i$ is the time interval during which $i$ lineages exist.
While \cite{Chen2012} provides results on the neutral AFS and the AFS
in regions that underwent selective sweeps, we here add the case of
negative selection.  More importantly, our work illustrates that the
use of the duality relation between the Wright-Fisher diffusion
process and a class of birth-death processes can tie several
frameworks together.  We therefore foresee a broad applicability of
the framework presented in this study.

\section{Technical details and remaining proofs}
\label{sec:technical}

In order to find the limits of $\Xc^{N,L}$ and $\Mc^{N,L}$ as
$L\to\infty$, is is convenient to use a method first devised for
Poisson random balls models \citep{Kaj2007}, which applies a set of
signed measures for indexing.  

\subsection{Indexing random fields by measures}

Let $\Hc$ be the set of finite, signed measures on the positive
real line, let $|\mu|$ denote the variation norm on $\Hc$, and put
\[
\Hc_0=\{\mu\in \Hc: \int_0^\infty |\mu|(du)<\infty\},\quad 
\Hc_1= \{\mu\in \Hc: \int_0^\infty  u \,|\mu|(du)<\infty\}.
\]
For a given $f\in\Fc$, denote
\[
\langle \Xc^{N,L},\mu\rangle
=\int \langle \Xc_u^{N,L}\!,f\rangle \,\mu(du)
=\sum_{i=1}^L \int f(Y_{N,L}^{(i)}(u))\,\mu(du),\quad \mu\in\Hc_0.
\]
In particular, with $\mu=\delta_t$ we have $\langle
\Xc^{N,L},\delta_t\rangle=\langle\Xc^{N,L}_t\!,f\rangle$. The quantity
$\langle \Mc^{N,L},\mu\rangle$ is defined analogously.  Similarly, for
$\mu\in \Hc$,
\[
\langle \Xc^N\!,\mu\rangle=\int \langle \Xc_u^N\!,f\rangle\,\mu(du)
=\int_{\R^+\times\Dc} 
\int f(\xi^s_u)\,\mu(du)\,\Nc_N(ds,d\xi^s).
\]
Now, in greater generality than Proposition \ref{prop:stochint},
$\langle \Xc^N\!,\mu\rangle$ is well-defined with finite expected
value, such that, for every $f\in \Fc$,
\begin{equation}\label{eq:indexmu1}
\E \langle\Xc^N\!,\mu\rangle= 
\int N\theta \,\E^\gamma_{1/N}\Big[\int_0^u f(\xi^0_v)
  \,dv\Big]\,\mu(du)<\infty, \quad \mu\in \Hc_1,
\end{equation}
and for $f\in \Fc_0$,  
\begin{equation}\label{eq:indexmu2}
\E \langle\Xc^N\!,\mu\rangle= 
\int N\theta\, \E^\gamma_{1/N}\Big[\int_0^{u\wedge \tau} f(\xi^0_v)
  \,dv\Big]\,\mu(du)<\infty, \quad \mu\in \Hc_0.
\end{equation}
Indeed, to verify (\ref{eq:indexmu1}) it suffices to show
\[
\int_{\R^+\times\Dc}  \Big|\int
f(\xi^s_u)\,\mu(du)\Big|\,n_N(ds,d\xi^s)<\infty.
\]
Here
\begin{align*}
\int_{\R^+\times\Dc} \int
|f(\xi^s_u)|\,|\mu|(du)\,n_N(ds,d\xi^s)
&=\E^\gamma_{1/N}\int_0^\infty \int|f(\xi^s_u)|\,|\mu|(du)\,N\theta ds\\
&= N\theta \int
\E^\gamma_{1/N}\int_0^\infty|f(\xi^s_u)|\,ds\,|\mu|(du),
\end{align*}
and for fixed $u$, 
\[
\E^\gamma_{1/N}\int_0^\infty|f(\xi^s_u)|\,ds
=\E^\gamma_{1/N}\int_0^u|f(\xi^s_u)|\,ds
=\E^\gamma_{1/N}\int_0^u|f(\xi^0_v)|\,dv.
\]
For $f\in\Fc$ this implies
\[
\int \E^\gamma_{1/N}\int_0^\infty|f(\xi^s_u)|\,ds\,|\mu|(du) 
\le  \|f\|_\infty \int u\,|\mu|(du) <\infty,\quad \mu\in \Hc_1,
\]
and so 
\begin{align*}
\E \langle\Xc^f_N,\mu\rangle &= 
\int_{\R^+\times\Dc}  \int f(\xi^s_u)\,\mu(du)\,n_N(ds,d\xi^s)\\
&=\int  N\theta\,\E^\gamma_{1/N}\Big[\int_0^u f(\xi^0_v)
  \,dv\Big]\,\mu(du),\quad \mu\in \Hc_1.
\end{align*}
To show (\ref{eq:indexmu2}) we assume $f\in \Fc_0$. Then, for
fixed $u$,
\[
\E^\gamma_{1/N}\int_0^\infty|f(\xi^s_u)|\,ds
=\E^\gamma_{1/N}\int_0^{u\wedge \tau}|f(\xi^0_v)|\,dv
\le \E^\gamma_{1/N}\int_0^\tau|f(\xi^0_v)|\,dv 
\]
and hence using (\ref{eq:scaledoccupationtime}),
\[
N\theta \int
\E^\gamma_{1/N}\int_0^\infty|f(\xi^s_u)|\,ds\,|\mu|(du)
\le \theta \sup_N N\E^\gamma_{1/N}\int_0^\tau|f(\xi^0_v)|\,dv
\,\int|\mu|(du)<\infty.
\]

\subsection{Proof of Proposition \ref{prop:Ltoinfty}}

To demonstrate the convergence in finite dimensional
distributions of the sequence $\{\langle\Xc^{N,L}_t\!,f\rangle,t\ge 0\}$ 
to $\{\langle\Xc^N_t\!,f\rangle,t\ge 0\}$ 
we need to show the convergence in distribution
\[
\sum_{k=1}^n\alpha_k \langle\Xc^{N,L}_{t_k}\!,f\rangle \stackrel{d}{\implies}
  \sum_{k=1}^n\alpha_k \langle\Xc^N_{t_k} \!,f\rangle 
\]
for arbitrary weights $\alpha_1,\dots,\alpha_n$ and arbitrary time points
$t_1\le \dots\le t_n$, $n\ge 1$.   But obviously, 
letting $\mu_n=\sum_{k=1}^n \alpha_k\delta_{t_k}$, this is the same as
the convergence in distribution of $\langle\Xc^{N,L}\!,\mu_n\rangle$
to $\langle\Xc^N\!,\mu_n\rangle$.  Because of (\ref{eq:indexmu1}) and
(\ref{eq:indexmu2}) we have control of the generalized Poisson functionals 
in the limit and, therefore, we may continue with the method of moment
generating functions.  Specifically, using the defining properties of
Poisson random measures, 
\[
\ln \E \exp\{\alpha \langle\Xc^N\!,\mu\rangle \}= N\theta
\,\E_{1/N}^\gamma\Big[\int_0^\infty (e^{\alpha \int
    f(\xi^s_u)\,\mu(du)}-1)\,ds\Big],
\]
where $\mu\in \Hc_1$ or $\Hc_0$ in line with using either
(\ref{eq:indexmu1}) or (\ref{eq:indexmu2}).

On the other hand, for $\Xc^{N,L}$, by the independence of the sites, 
\begin{align}\nonumber
\ln \E \exp\{\alpha \langle\Xc^{N,L}\!,\mu\rangle \}
& =  L\ln \E \exp\{\alpha \int f(Y^{(1)}_{N,L}(u))\,\mu(du) \}\\
& \sim  L\,\E\Big[ \exp\Big\{\alpha \int
  f(Y^{(1)}_{N,L}(u))\,\mu(du)\Big\}-1\Big],  
\label{eq:momentgenfunct}
\end{align}
where we use the notation $A_L\sim B_L$ for $A_L/B_L\to 1$ as
$L\to\infty$.  Let $\{\kappa_j\}_{j\ge 1}$ be independent, exponential random
variables with intensity $N\theta/L$. Since $f(0)=0$,
\[
 \int f(Y^{(1)}_{N,L}(u))\,\mu(du)
=\int_{\kappa_1}^{\kappa_1+\tau}f(\xi_u^{\kappa_1})\,\mu(du) 
+\int_{\kappa_1+\tau+\kappa_2}^\infty f(Y^{(1)}_{N,L}(u))\,\mu(du).
\]
The expected value over $\kappa_1$ on the right hand side in
(\ref{eq:momentgenfunct}) now evaluates to 
\begin{align*}
\int_0^\infty \theta N e^{-\theta Ns/L}
\E\Big[ \exp\Big\{\alpha \int_s^{s+\tau}
  f(\xi_u^s)\,\mu(du)+\alpha\int_{s+\tau+\kappa_2}^\infty \dots\,\mu(du) \Big\}-1\Big]\,ds.
\end{align*}
Letting $L\to\infty$ we have $e^{-\theta Ns/L}\to 1$ and $\kappa_2\to\infty$, and so
\[
\ln \E \exp\{\alpha \langle\Xc^{N,L}\!,\mu\rangle \}
 \to   \,\int_0^\infty \theta N 
\E_{1/N}^\gamma\Big[ \exp\Big\{\alpha \int_s^{s+\tau}
  f(\xi_u^s)\,\mu(du)\Big\}-1\Big]\,ds,
\]
which is the logarithmic moment generating function of
$\langle\Xc^N\!,\mu\rangle$, hence completing the proof of convergence
in distribution. 


\subsection{Extending Theorem \ref{thm:main} from $\Fc_0$ to $\Fc$}

Our proof of Theorem \ref{thm:main} is stated for $f\in \Fc_0$, hence
specifically functions $f$ with $f(1)=0$. For some applications,
however, it is more natural to work with the class $\Fc$ allowing
$f(1)\not= 0$.  To account for such cases we included two extended
versions of (\ref{eq:thmmainlimit}) in the theorem.  These follow
immediately from (\ref{eq:thmmainlimit}) together with relations
(\ref{lem:extend2}) and (\ref{lem:extend3}) of the following lemma. 

\begin{lemma} \label{lem:extendmain}
Let $f$ be a bounded real-valued function defined on
  $[0,1]$ such that $f(0)=0$.  Then  
\begin{align}\label{lem:extend1}
&\E_x^\gamma \int_0^t f(\xi_u)\,du 
=\E_x^\gamma \int_0^{t\wedge\tau} f(\xi_u)\,du
+f(1)q_\gamma(x)\int_0^t \Pb_x^{*\gamma} (\tau_1\le u)\,du,\\[2mm]
\nonumber
&\E_x^\gamma \int_0^{t\wedge\tau} f(\xi_u)\,du 
  =\E_x^\gamma \int_0^{t\wedge\tau} 
(f(\xi_u)
-f(1)q_\gamma(\xi_u))\,du\\
&\qquad +f(1)q_\gamma(x)\int_0^t \Pb_x^{*\gamma} (\tau_1>u)\,du,
\label{lem:extend2}\\[2mm]
\label{lem:extend3}
&\E_x^\gamma \int_0^t f(\xi_u)\,du 
=\E_x^\gamma \int_0^{t\wedge\tau} (f(\xi_u)-f(1)q_\gamma(\xi_u))\,du
+f(1)q_\gamma(x)t.
\end{align}

\end{lemma}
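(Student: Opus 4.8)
The plan is to prove the three displays in order: \eqref{lem:extend1} by a direct path decomposition, \eqref{lem:extend2} via the martingale property of the fixation‑probability function, and \eqref{lem:extend3} by combining the first two. Throughout, $f$ bounded and $t<\infty$ make every expectation finite and every use of Fubini legitimate, so no integrability issue arises.

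First I would prove \eqref{lem:extend1}. Split $\int_0^t f(\xi_u)\,du=\int_0^{t\wedge\tau}f(\xi_u)\,du+\int_{t\wedge\tau}^t f(\xi_u)\,du$, the last term vanishing on $\{t<\tau\}$. On $\{\tau\le t\}$ the path is frozen at the boundary, $\xi_u=\xi_\tau\in\{0,1\}$ for $u\in[\tau,t]$; since $f(0)=0$ only fixation contributes, and there $\tau=\tau_1$ and $\xi_u\equiv 1$, so $\int_{t\wedge\tau}^t f(\xi_u)\,du=f(1)(t-\tau_1)1_{\{\tau_1\le t\}}=f(1)(t-\tau_1)^{+}1_{\{\tau_1<\infty\}}$. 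Writing $(t-\tau_1)^{+}=\int_0^t 1_{\{\tau_1\le u\}}\,du$ and applying Fubini, the expectation of this term equals $f(1)\int_0^t\Pb_x^\gamma(\tau_1\le u)\,du$, and the decomposition $\Pb_x^\gamma(\tau_1\le u)=q_\gamma(x)\Pb_x^{*\gamma}(\tau_1\le u)$ yields \eqref{lem:extend1}.

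The heart of \eqref{lem:extend2} is the identity $\E_x^\gamma\int_0^{t\wedge\tau}q_\gamma(\xi_u)\,du=q_\gamma(x)\int_0^t\Pb_x^{*\gamma}(\tau_1>u)\,du$. Since $q_\gamma$ is the fixation probability, it is a bounded $\gamma$‑harmonic function for the Wright–Fisher generator, solving $\tfrac12 x(1-x)q_\gamma''+\gamma x(1-x)q_\gamma'=0$ with $q_\gamma(0)=0$, $q_\gamma(1)=1$; hence the stopped process $q_\gamma(\xi_{u\wedge\tau})$ is a bounded martingale in the natural filtration. Optional stopping gives $\E_x^\gamma[q_\gamma(\xi_u)1_{\{u<\tau\}}]=q_\gamma(x)-\E_x^\gamma[q_\gamma(\xi_\tau)1_{\{\tau\le u\}}]$; on absorption $q_\gamma(\xi_\tau)=1_{\{\xi_\tau=1\}}=1_{\{\tau=\tau_1\}}$ and $\{\tau\le u,\ \tau=\tau_1\}=\{\tau_1\le u\}$, so this equals $q_\gamma(x)-\Pb_x^\gamma(\tau_1\le u)=q_\gamma(x)\Pb_x^{*\gamma}(\tau_1>u)$. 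Integrating over $u\in[0,t]$ and using Fubini proves the identity; multiplying by $f(1)$ and rearranging converts $\E_x^\gamma\int_0^{t\wedge\tau}f(\xi_u)\,du$ into the two‑term form of \eqref{lem:extend2}. Finally, substituting \eqref{lem:extend2} into \eqref{lem:extend1}, the contributions $f(1)q_\gamma(x)\int_0^t\Pb_x^{*\gamma}(\tau_1>u)\,du$ and $f(1)q_\gamma(x)\int_0^t\Pb_x^{*\gamma}(\tau_1\le u)\,du$ add to $f(1)q_\gamma(x)t$, giving \eqref{lem:extend3}. The only genuinely delicate point is the martingale/optional‑stopping step: verifying that $q_\gamma(\xi_{u\wedge\tau})$ is a true martingale and correctly identifying $q_\gamma(\xi_\tau)$ with the indicator of fixation, together with the bookkeeping that $\tau=\tau_1$ on $\{\tau_1<\infty\}$; the rest is routine.
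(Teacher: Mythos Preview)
Your proof is correct and follows essentially the same route as the paper: a path decomposition at $t\wedge\tau$ for (\ref{lem:extend1}), the harmonicity/martingale property of $q_\gamma$ for (\ref{lem:extend2}), and their combination for (\ref{lem:extend3}). The only cosmetic difference is that you obtain $\E_x^\gamma\int_0^{t\wedge\tau}q_\gamma(\xi_u)\,du$ directly via optional stopping, whereas the paper uses $\E_x^\gamma[q_\gamma(\xi_u)]=q_\gamma(x)$ on $[0,t]$ and then re-applies (\ref{lem:extend1}) to the auxiliary function $g=f-f(1)q_\gamma$, which has $g(1)=0$.
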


\begin{proof}
For bounded functions $f$ on $[0,1]$,
\[
\E^\gamma_x\int_0^{t\wedge\tau}f(\xi_r)\,dr
=\int_0^t \E^\gamma_x[f(\xi_r),\tau>r]\,dr
\]
and
\begin{align*}
\E^\gamma_x
f(\xi_s)&=\E^\gamma_x[f(\xi_s),\tau_0<s]+\E^\gamma_x[f(\xi_s),\tau_1<s]
+\E^\gamma_x[f(\xi_s),\tau>s]\\
&=f(0) \Pb^\gamma_x(\tau_0<s)+f(1)
\Pb^\gamma_x(\tau_1<s)+\E^\gamma_x[f(\xi_s),\tau>s],
\end{align*}
hence
\begin{align*}
\E_x^\gamma \int_0^t& f(\xi_u)\,du\\ 
&=\E_x^\gamma \int_0^{t\wedge\tau} f(\xi_u)\,du 
+f(0)\int_0^t \Pb_x^\gamma (\tau_0\le u)\,du 
+f(1)\int_0^t \Pb_x^\gamma (\tau_1\le u)\,du\\
&=\E_x^\gamma \int_0^{t\wedge\tau} f(\xi_u)\,du
+f(0)(1-q_\gamma(x))
+f(1)q_\gamma(x)t\\
&\quad -f(0)(1-q_\gamma(x))\int_0^t \Pb_x^{*\gamma}(\tau_0> u)\,du
-f(1)q_\gamma(x)\int_0^t \Pb_x^{*\gamma} (\tau_1>u)\,du.
\end{align*}
Take $f(0)=0$ to obtain statement (\ref{lem:extend1}). Also
\begin{align*}
\E_x^\gamma \int_0^{t\wedge\tau}\!\! f(\xi_u)\,du 
&=\E_x^\gamma \int_0^t f(\xi_u)\,du-f(1)q_\gamma(x)t
 +f(1)q_\gamma(x)\int_0^t \Pb_x^{*\gamma} (\tau_1>u)\,du.
\end{align*}
Using $\E^\gamma_x[q_\gamma(\xi_u)]=q_\gamma(s)$, this may be
rewritten
\begin{align*}
\E_x^\gamma \int_0^{t\wedge\tau} f(\xi_u)\,du 
&=\E_x^\gamma \int_0^t (f(\xi_u)-f(1)q_\gamma(\xi_u))\,du\\
&\quad  +f(1)q_\gamma(x)\int_0^t \Pb_x^{*\gamma} (\tau_1>u)\,du.
\end{align*} 
Therefore, since the function $g$ defined by
$g(y)=f(y)-f(1)q_\gamma(y)$ satisfies $g(0)=g(1)=0$, an application of
(\ref{lem:extend1}) to $g$ implies the second statement (\ref{lem:extend2}). Combine (\ref{lem:extend1}) and (\ref{lem:extend2}) to
get (\ref{lem:extend3}).
\end{proof}

\section{Acknowledgments}
The authors are grateful to Hans Ellegren for encouraging this work.

\newpage
\section*{References}
\bibliography{cites_selection}{}

\begin{thebibliography}{30}
\expandafter\ifx\csname natexlab\endcsname\relax\def\natexlab#1{#1}\fi
\providecommand{\url}[1]{\texttt{#1}}
\providecommand{\href}[2]{#2}
\providecommand{\path}[1]{#1}
\providecommand{\DOIprefix}{doi:}
\providecommand{\ArXivprefix}{arXiv:}
\providecommand{\URLprefix}{URL: }
\providecommand{\Pubmedprefix}{pmid:}
\providecommand{\doi}[1]{\href{http://dx.doi.org/#1}{\path{#1}}}
\providecommand{\Pubmed}[1]{\href{pmid:#1}{\path{#1}}}
\providecommand{\bibinfo}[2]{#2}
\ifx\xfnm\undefined \def\xfnm[#1]{\unskip,\space#1}\fi
\bibitem[{Athreya and Swart(2005)}]{Athreya2005}
\bibinfo{author}{Athreya\xfnm[ S.]}, \bibinfo{author}{Swart\xfnm[ J.]}.
\newblock \bibinfo{title}{Branching-coalescing particle systems}.
\newblock \bibinfo{journal}{Probab Theory Relat Fields}
  \bibinfo{year}{2005};\bibinfo{volume}{131}:\bibinfo{pages}{376--414}.
\bibitem[{Breiman(1992)}]{Breiman1992}
\bibinfo{author}{Breiman\xfnm[ L.]}.
\newblock \bibinfo{title}{Probability}.
\newblock \bibinfo{publisher}{SIAM: Classics in Applied Mathematics},
  \bibinfo{year}{1992}.
\bibitem[{Chen(2012)}]{Chen2012}
\bibinfo{author}{Chen\xfnm[ H.]}.
\newblock \bibinfo{title}{The joint allele frequency spectrum of multiple
  populations: A coalescent theory approach}.
\newblock \bibinfo{journal}{Theor Pop Biol}
  \bibinfo{year}{2012};\bibinfo{volume}{81}:\bibinfo{pages}{179--195}.
\bibitem[{Evans et~al.(2007)Evans, Shvets and Slatkin}]{Evans2007}
\bibinfo{author}{Evans\xfnm[ S.E.]}, \bibinfo{author}{Shvets\xfnm[ Y.]},
  \bibinfo{author}{Slatkin\xfnm[ M.]}.
\newblock \bibinfo{title}{Non-equilibrium theory of the allele frequency
  spectrum.}
\newblock \bibinfo{journal}{Theor Pop Biol}
  \bibinfo{year}{2007};\bibinfo{volume}{71}:\bibinfo{pages}{109--119}.
\bibitem[{Excoffier et~al.(2013)Excoffier, Dupanloup, Huerta-Sanchez, Sousa and
  Foll}]{Excoffier2013}
\bibinfo{author}{Excoffier\xfnm[ L.]}, \bibinfo{author}{Dupanloup\xfnm[ I.]},
  \bibinfo{author}{Huerta-Sanchez\xfnm[ E.]}, \bibinfo{author}{Sousa\xfnm[
  V.C.]}, \bibinfo{author}{Foll\xfnm[ M.]}.
\newblock \bibinfo{title}{Robust demographic inference from genomic and snp
  data}.
\newblock \bibinfo{journal}{Plos Genet}
  \bibinfo{year}{2013};\bibinfo{volume}{9}.
\bibitem[{Fay and Wu(2000)}]{Fay2000}
\bibinfo{author}{Fay\xfnm[ J.C.]}, \bibinfo{author}{Wu\xfnm[ C.I.]}.
\newblock \bibinfo{title}{Hitchhiking under positive darwinian selection}.
\newblock \bibinfo{journal}{Genetics}
  \bibinfo{year}{2000};\bibinfo{volume}{155}:\bibinfo{pages}{1405--1413}.
\bibitem[{Fisher(1930)}]{Fisher1930}
\bibinfo{author}{Fisher\xfnm[ R.A.]}.
\newblock \bibinfo{title}{The Genetical Theory of Natural Selection}.
\newblock \bibinfo{publisher}{Clarendon Press, Oxford}, \bibinfo{year}{1930}.
\bibitem[{Fu(1995)}]{Fu1995}
\bibinfo{author}{Fu\xfnm[ Y.X.]}.
\newblock \bibinfo{title}{Statistical properties of segregating sites}.
\newblock \bibinfo{journal}{Theor Pop Biol}
  \bibinfo{year}{1995};\bibinfo{volume}{48}:\bibinfo{pages}{172--197}.
\bibitem[{Griffiths(1979)}]{Griffiths1979}
\bibinfo{author}{Griffiths\xfnm[ R.]}.
\newblock \bibinfo{title}{On the distribution of allele frequencies in a
  diffusion model}.
\newblock \bibinfo{journal}{Theor Pop Biol}
  \bibinfo{year}{1979};\bibinfo{volume}{15}:\bibinfo{pages}{140--158}.
\bibitem[{Griffiths and Tavar\'e(1998)}]{Griffiths1998}
\bibinfo{author}{Griffiths\xfnm[ R.]}, \bibinfo{author}{Tavar\'e\xfnm[ S.]}.
\newblock \bibinfo{title}{The age of a mutation in a general coalescent tree}.
\newblock \bibinfo{journal}{Stochastic Models}
  \bibinfo{year}{1998};\bibinfo{volume}{1-2}:\bibinfo{pages}{273--295}.
\bibitem[{Gutenkunst et~al.(2009)Gutenkunst, Hernandez, Williamson and
  Bustamante}]{Gutenkunst2009}
\bibinfo{author}{Gutenkunst\xfnm[ R.N.]}, \bibinfo{author}{Hernandez\xfnm[
  R.D.]}, \bibinfo{author}{Williamson\xfnm[ S.H.]},
  \bibinfo{author}{Bustamante\xfnm[ C.D.]}.
\newblock \bibinfo{title}{Inferring the joint demographic history of multiple
  populations from multidimensional snp frequency data}.
\newblock \bibinfo{journal}{Plos Genet}
  \bibinfo{year}{2009};\bibinfo{volume}{5}.
\bibitem[{Kaj et~al.(2007)Kaj, Leskel\"a, Norros and Schmidt}]{Kaj2007}
\bibinfo{author}{Kaj\xfnm[ I.]}, \bibinfo{author}{Leskel\"a\xfnm[ L.]},
  \bibinfo{author}{Norros\xfnm[ I.]}, \bibinfo{author}{Schmidt\xfnm[ V.]}.
\newblock \bibinfo{title}{Scaling limits for random fields with long-range
  dependence.}
\newblock \bibinfo{journal}{Annals Probab}
  \bibinfo{year}{2007};\bibinfo{volume}{35}:\bibinfo{pages}{528--550}.
\bibitem[{Kallenberg(2002)}]{Kallenberg2002}
\bibinfo{author}{Kallenberg\xfnm[ O.]}.
\newblock \bibinfo{title}{Foundations of Modern Probability}.
\newblock \bibinfo{publisher}{Springer-Verlag, New York}, \bibinfo{year}{2002}.
\bibitem[{Karlin and Taylor(1981)}]{KarlinTaylor1981}
\bibinfo{author}{Karlin\xfnm[ S.]}, \bibinfo{author}{Taylor\xfnm[ H.M.]}.
\newblock \bibinfo{title}{A second course in stochastic processes}.
\newblock \bibinfo{publisher}{Academic Press, San Diego}, \bibinfo{year}{1981}.
\bibitem[{Kimura(1964)}]{Kimura1964}
\bibinfo{author}{Kimura\xfnm[ M.]}.
\newblock \bibinfo{title}{Diffusion models in population genetics}.
\newblock \bibinfo{journal}{J Appl Probab}
  \bibinfo{year}{1964};\bibinfo{volume}{1}:\bibinfo{pages}{177--232}.
\bibitem[{Kimura(1970{\natexlab{a}})}]{Kimura1970b}
\bibinfo{author}{Kimura\xfnm[ M.]}.
\newblock \bibinfo{title}{The length of time required for a selectively neutral
  mutant to reach fixation through random frequency drift in a finite
  population.}
\newblock \bibinfo{journal}{Genet Res}
  \bibinfo{year}{1970}{\natexlab{a}};\bibinfo{volume}{15}:\bibinfo{pages}{131-%
-133}.
\bibitem[{Kimura(1970{\natexlab{b}})}]{Kimura1970a}
\bibinfo{author}{Kimura\xfnm[ M.]}.
\newblock \bibinfo{title}{Stochastic processes in population genetics, with
  special reference to distribution of gene frequencies and probability of gene
  fixation.}
\newblock In: \bibinfo{editor}{Kojima\xfnm[ K.I.]}, editor.
  \bibinfo{booktitle}{Mathematical Topics in Population Genetics}.
  \bibinfo{address}{Berlin}: \bibinfo{publisher}{Springer-Verlag};
  \bibinfo{year}{1970}{\natexlab{b}}. p. \bibinfo{pages}{178--209}.
\bibitem[{Kingman(1982)}]{Kingman1982}
\bibinfo{author}{Kingman\xfnm[ J.]}.
\newblock \bibinfo{title}{The coalescent.}
\newblock \bibinfo{journal}{Stochastic Process Appl}
  \bibinfo{year}{1982};\bibinfo{volume}{13}:\bibinfo{pages}{235--248}.
\bibitem[{Mugal et~al.(2014)Mugal, Wolf and Kaj}]{Mugal2014}
\bibinfo{author}{Mugal\xfnm[ C.F.]}, \bibinfo{author}{Wolf\xfnm[ J.B.W.]},
  \bibinfo{author}{Kaj\xfnm[ I.]}.
\newblock \bibinfo{title}{Why time matters: codon evolution and the temporal
  dynamics of dn/ds}.
\newblock \bibinfo{journal}{Mol Biol Evol}
  \bibinfo{year}{2014};\bibinfo{volume}{31}:\bibinfo{pages}{212--231}.
\bibitem[{Sawyer and Hartl(1992)}]{Sawyer1992}
\bibinfo{author}{Sawyer\xfnm[ S.A.]}, \bibinfo{author}{Hartl\xfnm[ D.L.]}.
\newblock \bibinfo{title}{Population-genetics of polymorphism and divergence.}
\newblock \bibinfo{journal}{Genetics}
  \bibinfo{year}{1992};\bibinfo{volume}{132}:\bibinfo{pages}{1161--1176}.
\bibitem[{Seehausen et~al.(2014)Seehausen, Butlin, Keller, Wagner, Boughman,
  Hohenlohe, Peichel, Saetre, Bank, Brannstrom, Brelsford, Clarkson,
  Eroukhmanoff, Feder, Fischer, Foote, Franchini, Jiggins, Jones, Lindholm,
  Lucek, Maan, Marques, Martin, Matthews, Meier, Most, Nachman, Nonaka,
  Rennison, Schwarzer, Watson, Westram and Widmer}]{Seehausen2014}
\bibinfo{author}{Seehausen\xfnm[ O.]}, \bibinfo{author}{Butlin\xfnm[ R.K.]},
  \bibinfo{author}{Keller\xfnm[ I.]}, \bibinfo{author}{Wagner\xfnm[ C.E.]},
  \bibinfo{author}{Boughman\xfnm[ J.W.]}, \bibinfo{author}{Hohenlohe\xfnm[
  P.A.]}, \bibinfo{author}{Peichel\xfnm[ C.L.]}, \bibinfo{author}{Saetre\xfnm[
  G.P.]}, \bibinfo{author}{Bank\xfnm[ C.]}, \bibinfo{author}{Brannstrom\xfnm[
  A.]}, \bibinfo{author}{Brelsford\xfnm[ A.]}, \bibinfo{author}{Clarkson\xfnm[
  C.S.]}, \bibinfo{author}{Eroukhmanoff\xfnm[ F.]},
  \bibinfo{author}{Feder\xfnm[ J.L.]}, \bibinfo{author}{Fischer\xfnm[ M.C.]},
  \bibinfo{author}{Foote\xfnm[ A.D.]}, \bibinfo{author}{Franchini\xfnm[ P.]},
  \bibinfo{author}{Jiggins\xfnm[ C.D.]}, \bibinfo{author}{Jones\xfnm[ F.C.]},
  \bibinfo{author}{Lindholm\xfnm[ A.K.]}, \bibinfo{author}{Lucek\xfnm[ K.]},
  \bibinfo{author}{Maan\xfnm[ M.E.]}, \bibinfo{author}{Marques\xfnm[ D.A.]},
  \bibinfo{author}{Martin\xfnm[ S.H.]}, \bibinfo{author}{Matthews\xfnm[ B.]},
  \bibinfo{author}{Meier\xfnm[ J.I.]}, \bibinfo{author}{Most\xfnm[ M.]},
  \bibinfo{author}{Nachman\xfnm[ M.W.]}, \bibinfo{author}{Nonaka\xfnm[ E.]},
  \bibinfo{author}{Rennison\xfnm[ D.J.]}, \bibinfo{author}{Schwarzer\xfnm[
  J.]}, \bibinfo{author}{Watson\xfnm[ E.T.]}, \bibinfo{author}{Westram\xfnm[
  A.M.]}, \bibinfo{author}{Widmer\xfnm[ A.]}.
\newblock \bibinfo{title}{Genomics and the origin of species}.
\newblock \bibinfo{journal}{Nature Rev Genet}
  \bibinfo{year}{2014};\bibinfo{volume}{15}:\bibinfo{pages}{176--192}.
\bibitem[{Shiga and Uchiyama(1986)}]{Shiga1986}
\bibinfo{author}{Shiga\xfnm[ T.]}, \bibinfo{author}{Uchiyama\xfnm[ K.]}.
\newblock \bibinfo{title}{Stationary states and their stability of the stepping
  stone model involving mutation and selection.}
\newblock \bibinfo{journal}{Probab Theory Relat Fields}
  \bibinfo{year}{1986};\bibinfo{volume}{73}:\bibinfo{pages}{87--117}.
\bibitem[{Tajima(1989)}]{Tajima1989}
\bibinfo{author}{Tajima\xfnm[ F.]}.
\newblock \bibinfo{title}{Statistical method for testing the neutral mutation
  hypothesis by dna polymorphism}.
\newblock \bibinfo{journal}{Genetics}
  \bibinfo{year}{1989};\bibinfo{volume}{123}:\bibinfo{pages}{585--595}.
\bibitem[{Tavar\'e(1984)}]{Tavare1984}
\bibinfo{author}{Tavar\'e\xfnm[ S.]}.
\newblock \bibinfo{title}{Line-of-descent and genealogical processes, and their
  applications in population genetics models}.
\newblock \bibinfo{journal}{Theor Pop Biol}
  \bibinfo{year}{1984};\bibinfo{volume}{26}:\bibinfo{pages}{119--164}.
\bibitem[{Wakeley and Hey(1997)}]{Wakeley1997}
\bibinfo{author}{Wakeley\xfnm[ J.]}, \bibinfo{author}{Hey\xfnm[ J.]}.
\newblock \bibinfo{title}{Estimating ancestral population parameters}.
\newblock \bibinfo{journal}{Genetics}
  \bibinfo{year}{1997};\bibinfo{volume}{145}:\bibinfo{pages}{847--855}.
\bibitem[{Wright(1931)}]{Wright1931}
\bibinfo{author}{Wright\xfnm[ S.]}.
\newblock \bibinfo{title}{Evolution in mendelian populations.}
\newblock \bibinfo{journal}{Genetics}
  \bibinfo{year}{1931};\bibinfo{volume}{16}:\bibinfo{pages}{97--159}.
\bibitem[{Wright(1938)}]{Wright1938}
\bibinfo{author}{Wright\xfnm[ S.]}.
\newblock \bibinfo{title}{The distribution of gene frequencies under
  irreversible mutation.}
\newblock \bibinfo{journal}{Proc Natl Acad Sci}
  \bibinfo{year}{1938};\bibinfo{volume}{24}:\bibinfo{pages}{253--259}.
\bibitem[{Zeng et~al.(2006)Zeng, Fu, Shi and Wu}]{Zeng2006}
\bibinfo{author}{Zeng\xfnm[ K.]}, \bibinfo{author}{Fu\xfnm[ Y.X.]},
  \bibinfo{author}{Shi\xfnm[ S.]}, \bibinfo{author}{Wu\xfnm[ C.I.]}.
\newblock \bibinfo{title}{Statistical tests for detecting positive selection by
  utilizing high-frequency variants}.
\newblock \bibinfo{journal}{Genetics}
  \bibinfo{year}{2006};\bibinfo{volume}{174}:\bibinfo{pages}{1431--1439}.
\bibitem[{Zivkovic et~al.(2015)Zivkovic, Steinrucken, Song and
  Stephan}]{Zivkovic2015}
\bibinfo{author}{Zivkovic\xfnm[ D.]}, \bibinfo{author}{Steinrucken\xfnm[ M.]},
  \bibinfo{author}{Song\xfnm[ Y.S.]}, \bibinfo{author}{Stephan\xfnm[ W.]}.
\newblock \bibinfo{title}{Transition densities and sample frequency spectra of
  diffusion processes with selection and variable population size}.
\newblock \bibinfo{journal}{Genetics}
  \bibinfo{year}{2015};\bibinfo{volume}{200}:\bibinfo{pages}{601--617}.
\bibitem[{Zivkovic and Stephan(2011)}]{Zivkovic2011}
\bibinfo{author}{Zivkovic\xfnm[ D.]}, \bibinfo{author}{Stephan\xfnm[ W.]}.
\newblock \bibinfo{title}{Analytical results on the neutral non-equilibrium
  allele frequency spectrum based on diffusion theory}.
\newblock \bibinfo{journal}{Theor Pop Biol}
  \bibinfo{year}{2011};\bibinfo{volume}{79}:\bibinfo{pages}{184--191}.

\end{thebibliography}

\end{document}